\pdfoutput=1
\documentclass[11pt]{article}

\usepackage[utf8]{inputenc}
\usepackage[T1]{fontenc}
\usepackage[a4paper,margin=1in]{geometry}
\usepackage{amsmath,amssymb,amsfonts,amsthm}
\usepackage{booktabs}
\usepackage{array}
\usepackage{tabularx}
\usepackage{graphicx}
\usepackage[hidelinks]{hyperref}

\newcolumntype{L}[1]{>{\raggedright\arraybackslash}p{#1}}
\newcommand{\seqsplit}[1]{#1}
\newcommand{\jobid}[1]{\texttt{\seqsplit{#1}}}
\setkeys{Gin}{draft=false}
\graphicspath{{./}}

\newtheorem{theorem}{Theorem}
\newtheorem{proposition}{Proposition}
\newtheorem{corollary}{Corollary}
\newtheorem{lemma}{Lemma}
\newtheorem{remark}{Remark}

\title{The finite-shot help-harm boundary of zero-noise extrapolation}
\author{Vicenzo Scavino Alfaro\\
Independent Researcher, Lima, Peru\\
\texttt{vicenzoscavino@hotmail.com}\\
ORCID: \href{https://orcid.org/0009-0000-2472-9785}{0009-0000-2472-9785}}
\date{May 2026}

\begin{document}
\maketitle

\begin{abstract}
Zero-noise extrapolation (ZNE) reduces noise-induced bias but can increase sampling variance through Richardson coefficients and shot splitting. We define a finite-shot help-harm boundary: the lower local mean-squared-error crossing where fixed Richardson ZNE changes from harmful to helpful. A local expansion shows that this boundary is governed by the first squared-bias improvement and first excess-variance penalty, producing either a shrinking power law, a budget threshold, or no shrinking lower boundary. Qiskit Aer simulations and variance-exponent fits support the predicted separation between deterministic stabilizer measurements and variational energy measurements, while readout-regime diagnostics and IBM Quantum checks delineate measurement-protocol and hardware-traceability limits.
\end{abstract}

\noindent\textbf{Keywords:} zero-noise extrapolation, quantum error mitigation, finite-shot sampling, Richardson extrapolation, QAOA, GHZ states

\section{Introduction}

Richardson extrapolation originated as a classical extrapolation method for finite-difference and numerical integration problems \cite{Richardson1911,BulirschStoer1966}. In quantum error mitigation (QEM), zero-noise extrapolation (ZNE) evaluates a circuit at amplified noise levels and extrapolates the measured expectation value to the zero-noise limit. Richardson ZNE forms a linear combination of finite-shot estimators whose coefficients cancel low-order terms in the noise expansion \cite{Temme2017,LiBenjamin2017,Endo2018,Cai2023,Endo2021}; practical noise-scaling methods include unitary folding, local/global folding, layerwise variants, and identity insertion \cite{GiurgicaTiron2020,He2020,Pascuzzi2022,RussoMari2024}.

Finite-shot budgets make bias reduction insufficient as a decision criterion. Richardson coefficients may have magnitude larger than one, include negative weights, and split the shot budget across several noise levels, so ZNE can reduce systematic bias while increasing sampling variance. This finite-resource issue is related to QEM benchmarking, direct Richardson analyses, quasiprobability methods, and sampling-cost lower bounds \cite{Krebsbach2022,MohammadipourLi2025,Bultrini2023,Takagi2022,Takagi2023,Tsubouchi2023,Qin2023,Quek2024}. The operational question addressed here is narrower: when does fixed finite-shot ZNE reduce mean-squared error (MSE) relative to the unmitigated estimator?

Define the finite-shot help-harm boundary as the lower perturbative crossing in Eq.~\eqref{eq:delta-mse}:
\begin{equation}
\Delta_{\mathrm{MSE}}(\epsilon,B)
=\mathrm{MSE}_{\mathrm{noisy}}(\epsilon,B)
-\mathrm{MSE}_{\mathrm{ZNE}}(\epsilon,B)=0,
\label{eq:delta-mse}
\end{equation}
where $\epsilon$ is physical noise strength and $B$ is the total shot budget. The contribution is not the existence of a ZNE bias-variance trade-off, which is already implicit in finite-shot mitigation, probabilistic cancellation, and sampling-overhead analyses \cite{Piveteau2021,Piveteau2022,Suzuki2022}. The contribution is a local asymptotic law for the first MSE crossing: if the leading squared-bias improvement is $D_p\epsilon^{2p}$ and the leading excess-variance penalty is $K_q\epsilon^q/B$, then
\[
\epsilon^*(B)\sim (K_q/D_p)^{1/(2p-q)}B^{-1/(2p-q)}\qquad(0\le q<2p).
\]
The adjacent regimes are also classified: $q=2p$ becomes a budget threshold, while $q>2p$ produces no leading-order shrinking lower boundary. Applying the theorem to fixed Richardson ZNE shows that, when the unmitigated estimator has linear leading bias, Richardson order changes constants through the variance penalty but not the boundary exponent; the exponent is governed by the effective measurement-variance exponent $q$. Observable-class consequences, sharpness examples, local optimality, finite-budget brackets, and a concrete depolarizing-contraction Pauli-string derivation complete the theoretical layer.

The empirical layer tests the mechanism rather than claiming hardware-scale advantage. Qiskit Aer simulations estimate boundary slopes and independent small-noise variance exponents; synthetic tests recover the $p=2$ and $q=2p$ regimes; constant-level checks compare fitted and predicted boundary constants; readout/SPAM diagnostics show that measurement noise can change the effective variance regime; and small IBM Quantum runs provide traceable GHZ/QAOA hardware consistency checks.

\medskip
\noindent\fbox{\parbox{0.94\textwidth}{\textbf{Scope of the theorem.} The result is a local perturbative MSE statement. It does not imply that ZNE is globally beneficial, uniformly scalable in the number of qubits, or protected from large-noise extrapolation failure. The constants $D_p$, $K_q$, and $C_{p,q}$ may depend unfavorably on circuit size, observable normalization, noise model, and measurement protocol.}}

\section{Setup and effective measurement variance}
\label{sec:setup}

Let $O$ be a bounded observable and write
\[
\mu_0=\langle O\rangle_0,
\qquad
\mu(\epsilon)=\langle O\rangle_\epsilon,
\qquad
v_{\mathcal P}(\epsilon)=\mathrm{Var}\!\left(Y_\epsilon^{(\mathcal P)}\right),
\]
where $Y_\epsilon^{(\mathcal P)}$ is the single-shot random variable induced by the fixed measurement protocol $\mathcal P$. For a single Pauli observable, $Y_\epsilon^{(\mathcal P)}\in\{-1,+1\}$ under the usual Pauli/stabilizer formalism \cite{NielsenChuang2010,AaronsonGottesman2004}. For Hamiltonians measured through multiple Pauli terms, the effective variance also depends on grouping, allocation, covariance, and normalization. Below, $v(\epsilon)$ denotes this protocol-induced variance.

We compare the unmitigated estimator $\widehat\mu_{\mathrm{noisy}}$, using all $B$ shots at noise $\epsilon$, with a fixed finite-shot ZNE estimator $\widehat\mu_{\mathrm{ZNE}}$. The criterion is
\[
\mathrm{MSE}(\widehat\mu)=\mathbb E[(\widehat\mu-\mu_0)^2]
=\mathrm{Bias}(\widehat\mu)^2+\mathrm{Var}(\widehat\mu),
\qquad
\Delta_{\mathrm{MSE}}>0\Longleftrightarrow \text{ZNE helps}.
\]
When $v(0)=0$, $\epsilon=0$ may be a trivial zero; the lower perturbative boundary is the first nontrivial positive crossing that converges to zero as $B\to\infty$.

\textbf{A1$(k)$: local expansion compatible with Richardson order.} For a fixed Richardson rule of order $k$, uniformly over the finite scale set $\{\lambda_j\}_{j=0}^{k}$,
\[
\mu(\lambda_j\epsilon)=\mu_0+\sum_{m=1}^{k}a_m\lambda_j^m\epsilon^m+O(\epsilon^{k+1}).
\]
In the main fixed-Richardson case, $a_1=\alpha\ne0$; the master theorem also allows a general leading squared-bias improvement $D_p\epsilon^{2p}$.

\textbf{A2: fixed finite Richardson rule.} The scale factors $\lambda_j\ge1$, coefficients $c_j$, and allocation $n_j=\pi_jB$ are fixed, with $\pi_j>0$ and $\sum_j\pi_j=1$.

\textbf{A3: independent sampling across noise levels.} Samples used at distinct scale factors are independent; correlated protocols change the variance constant and must be checked separately.

\textbf{A4: small-noise variance exponent.} Uniformly over the finite set of amplified noise levels,
\[
v(\epsilon)=\nu\epsilon^q+O(\epsilon^{q+\delta_v}),
\qquad \nu>0,
\qquad \delta_v>0.
\]
The exponent $q$ is a property of the observable and measurement protocol, not of the extrapolation rule.

\textbf{A5: leading MSE-balance expansion.} Define
\[
\begin{aligned}
D(\epsilon)&:=\mathrm{Bias}_{\mathrm{noisy}}(\epsilon)^2-
\mathrm{Bias}_{\mathrm{ZNE}}(\epsilon)^2,\\
A(\epsilon)&:=\mathrm{Var}_{\mathrm{ZNE}}(\epsilon)B-
\mathrm{Var}_{\mathrm{noisy}}(\epsilon)B.
\end{aligned}
\]
Assume, with differentiable power-law remainders, that
\[
D(\epsilon)=D_p\epsilon^{2p}+O(\epsilon^{2p+\delta_b}),
\qquad
A(\epsilon)=K_q\epsilon^q+O(\epsilon^{q+\delta_v}),
\]
so that
\[
\Delta_{\mathrm{MSE}}(\epsilon,B)
=D_p\epsilon^{2p}-\frac{K_q\epsilon^q}{B}+R_p(\epsilon,B),
\qquad
R_p=O(\epsilon^{2p+\delta_b})+O(\epsilon^{q+\delta_v}/B).
\]
The same leading MSE-balance is stated explicitly in Eq.~\eqref{eq:leading-mse-balance} of Theorem~\ref{thm:master}. Proposition~\ref{prop:primitive-to-balance} gives primitive sufficient conditions for this balance. Table~\ref{tab:assumptions} summarizes the assumptions used below, their role in the theorem, and the consequences of failure.

\begin{table}[t]
\caption{Assumptions, their role, and consequences of failure.}
\label{tab:assumptions}
\begin{tabular*}{\textwidth}{@{\extracolsep{\fill}}p{0.30\textwidth}p{0.34\textwidth}p{0.28\textwidth}}
\toprule
Assumption & Role in the theorem & If it fails \\
\midrule
$D_p>0$ or $\alpha\ne0$ in the fixed-Richardson case & Gives the leading squared-bias benefit $D_p\epsilon^{2p}$ & Boundary exponent changes, reverses, or disappears \\
$v(\epsilon)=\nu\epsilon^q+O(\epsilon^{q+\delta_v})$ & Determines the leading variance penalty & Boundary may change scale or fail to exist \\
$K_q>0$ or $K_{q,k}>0$ & Ensures a positive leading ZNE variance cost & Help-harm interpretation changes \\
Independent samples across scale factors & Removes covariance terms & Constants or leading terms can change \\
A1$(k)$ local expansion and controlled noise scaling & Makes Richardson cancellation valid to order $k$ & Extrapolation error can dominate \\
Leading MSE-balance expansion & Gives uniform root stability and explicit rate & Rate or even asymptotic root control may fail \\
\bottomrule
\end{tabular*}
\end{table}

\section{Fixed Richardson rules and variance penalty}
\label{sec:richardson-general}

Consider a fixed Richardson rule of order $k\ge1$ with scale factors
\[
1=\lambda_0<\lambda_1<\cdots<\lambda_k
\]
and coefficients $c_0,\ldots,c_k$ satisfying
\[
\sum_{j=0}^k c_j=1,
\qquad
\sum_{j=0}^k c_j\lambda_j^m=0,
\qquad
m=1,\ldots,k.
\]
The estimator is
\[
\widehat\mu_{\mathrm{ZNE}}
=\sum_{j=0}^k c_j\widehat\mu(\lambda_j\epsilon).
\]

\begin{proposition}[Richardson cancellation under A1$(k)$]
\label{prop:richardson-cancellation}
Under A1$(k)$ and the coefficient identities above,
\[
\mathbb E[\widehat\mu_{\mathrm{ZNE}}]-\mu_0=O(\epsilon^{k+1}).
\]
\end{proposition}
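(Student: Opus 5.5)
The plan is to compute the expectation by linearity and then use the coefficient identities to cancel the expansion from A1$(k)$ order by order. First, each $\widehat\mu(\lambda_j\epsilon)$ is an empirical mean of single-shot outcomes at scale $\lambda_j$, so it is unbiased for the noisy expectation:
\[
\mathbb E[\widehat\mu(\lambda_j\epsilon)]=\mu(\lambda_j\epsilon).
\]
This uses only that the protocol $\mathcal P$ is unbiased for $\langle O\rangle_{\lambda_j\epsilon}$. Linearity of expectation then gives
\[
\mathbb E[\widehat\mu_{\mathrm{ZNE}}]=\sum_{j=0}^k c_j\,\mu(\lambda_j\epsilon).
\]
The independence assumption A3 and the allocation in A2 play no role here. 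Only the fact that the $c_j$ are fixed (A2) matters.

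Next, I substitute the A1$(k)$ expansion into each term:
\[
\sum_{j=0}^k c_j\mu(\lambda_j\epsilon)
=\mu_0\sum_{j}c_j+\sum_{m=1}^k a_m\epsilon^m\sum_j c_j\lambda_j^m+\sum_j c_j\,r_j(\epsilon),
\]
where $r_j(\epsilon)=O(\epsilon^{k+1})$. Applying the identities $\sum_j c_j=1$ and $\sum_j c_j\lambda_j^m=0$ for $m=1,\dots,k$ leaves
\[
\mathbb E[\widehat\mu_{\mathrm{ZNE}}]-\mu_0=\sum_j c_j r_j(\epsilon).
\]

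The only step needing care is controlling this remainder. A1$(k)$ is stated uniformly over the finite scale set, so there exist constants $C$ and $\epsilon_0$ with $|r_j(\epsilon)|\le C\epsilon^{k+1}$ for all $j$ and all $0<\epsilon\le\epsilon_0$. Since the rule is fixed, $\sum_j|c_j|<\infty$, and therefore
\[
\Bigl|\sum_j c_j r_j(\epsilon)\Bigr|\le C\Bigl(\sum_j|c_j|\Bigr)\epsilon^{k+1}=O(\epsilon^{k+1}).
\]
If A1$(k)$ were only pointwise in $j$, the same conclusion would still hold, by taking the maximum of the finitely many constants and the minimum of the finitely many thresholds. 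I expect no real obstacle. The one subtle point is that the implied constant depends on $\{\lambda_j,c_j\}$, and typically grows with $\lambda_k^{k+1}\sum_j|c_j|$. That dependence is harmless for fixed $k$, consistent with the paper's scope remark.
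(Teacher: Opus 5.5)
Your proposal is correct and follows the paper's proof: substitute A1$(k)$ into $\sum_j c_j\mu(\lambda_j\epsilon)$, cancel the constant term and the orders $1,\dots,k$ using the coefficient identities, and bound the finite sum of remainders by $O(\epsilon^{k+1})$. The only addition is that you state the unbiasedness step $\mathbb E[\widehat\mu(\lambda_j\epsilon)]=\mu(\lambda_j\epsilon)$ explicitly, whereas the paper uses it without comment.
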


\begin{proof}
Using A1$(k)$ uniformly over the finite scale set,
\[
\sum_{j=0}^k c_j\mu(\lambda_j\epsilon)
=\sum_{j=0}^k c_j\left[\mu_0+\sum_{m=1}^{k}a_m\lambda_j^m\epsilon^m+O(\epsilon^{k+1})\right].
\]
The constant term gives $\mu_0\sum_jc_j=\mu_0$, and each power $m=1,\ldots,k$ vanishes because $\sum_jc_j\lambda_j^m=0$. Since the scale set is finite, the summed remainder remains $O(\epsilon^{k+1})$.
\end{proof}

By Proposition~\ref{prop:richardson-cancellation}, for any fixed $k\ge1$,
\[
\mathrm{Bias}_{\mathrm{noisy}}^2-
\mathrm{Bias}_{\mathrm{ZNE}}^2
=\alpha^2\epsilon^2+o(\epsilon^2),
\]
because the unmitigated squared bias is $\alpha^2\epsilon^2+O(\epsilon^3)$ whereas the ZNE squared bias is $O(\epsilon^{2(k+1)})$.

The ZNE variance under allocation $n_j=\pi_jB$ is
\[
\mathrm{Var}_{\mathrm{ZNE}}
=\frac1B\sum_{j=0}^k \frac{c_j^2}{\pi_j}v(\lambda_j\epsilon),
\]
whereas the unmitigated variance is $v(\epsilon)/B$. Hence the excess variance of ZNE relative to the unmitigated estimator is
\[
\frac{A_k(\epsilon)}{B},
\qquad
A_k(\epsilon)=\sum_{j=0}^k \frac{c_j^2}{\pi_j}v(\lambda_j\epsilon)-v(\epsilon).
\]
If $v(\epsilon)\sim\nu\epsilon^q$, then
\[
A_k(\epsilon)\sim K_{q,k}\epsilon^q,
\]
with
\[
K_{q,k}
=\nu\left[\sum_{j=0}^k\frac{c_j^2\lambda_j^q}{\pi_j}-1\right].
\]
\begin{proposition}[Positive variance penalty for nontrivial Richardson rules]
\label{prop:positive-variance-penalty}
Let $\nu>0$, $q\ge0$, $\pi_j>0$, $\sum_j\pi_j=1$, and $\lambda_j\ge1$. For any nontrivial Richardson rule of order $k\ge1$ satisfying
\[
\sum_{j=0}^k c_j=1,
\qquad
\sum_{j=0}^k c_j\lambda_j=0,
\]
one has $K_{q,k}>0$, where $K_{q,k}$ is the variance-penalty constant defined above.
\end{proposition}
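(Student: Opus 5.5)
The plan is to reduce the claim $K_{q,k}>0$ to the purely algebraic inequality
\[
S:=\sum_{j=0}^k\frac{c_j^2\lambda_j^q}{\pi_j}>1,
\]
since $K_{q,k}=\nu(S-1)$ and $\nu>0$. I would prove this inequality by combining three elementary lower bounds, then show that at least one of them is strict.

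\textbf{Step 1: drop the scale factors.} Because $\lambda_j\ge1$ and $q\ge0$, we have $\lambda_j^q\ge1$. Hence
\[
S\ge\sum_j c_j^2/\pi_j.
\]

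\textbf{Step 2: Cauchy--Schwarz against the allocation.} Since $\sum_j\pi_j=1$ and $\pi_j>0$,
\[
\Big(\sum_j|c_j|\Big)^2=\Big(\sum_j\frac{|c_j|}{\sqrt{\pi_j}}\sqrt{\pi_j}\Big)^2\le\Big(\sum_j\frac{c_j^2}{\pi_j}\Big)\Big(\sum_j\pi_j\Big)=\sum_j\frac{c_j^2}{\pi_j}.
\]
So $S\ge\|c\|_1^2$.

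\textbf{Step 3: the triangle inequality.} From $\sum_j c_j=1$ we get $\|c\|_1\ge|\sum_jc_j|=1$. At this point $S\ge1$, and it remains to upgrade one of these inequalities to a strict one.

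\textbf{Step 4: strictness, which is the main obstacle.} This is where the Richardson condition $\sum_jc_j\lambda_j=0$ enters. Equality $\|c\|_1=1=\sum_jc_j$ holds only if every $c_j\ge0$. In that case $\sum_jc_j\lambda_j\ge\min_j\lambda_j\sum_jc_j\ge1>0$, because $\lambda_j\ge1>0$. This contradicts $\sum_jc_j\lambda_j=0$.

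Therefore some coefficient is negative. Then $\|c\|_1>1$, so $S\ge\|c\|_1^2>1$ and $K_{q,k}>0$.

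Strictness therefore comes from the triangle-inequality step. I would remark that the "nontrivial rule" hypothesis ($k\ge1$) is exactly what makes the first-moment cancellation condition available. The cancellation forces a negative weight, and it is this negative weight that produces the strict variance penalty, independently of $q$ and of the allocation $\pi$.
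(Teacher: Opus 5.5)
Your proof is correct and follows the paper's argument essentially step for step: you drop the factors via $\lambda_j^q\ge1$, apply Cauchy--Schwarz against $\sum_j\pi_j=1$ to get $\sum_j c_j^2/\pi_j\ge(\sum_j|c_j|)^2$, and use $\sum_j c_j\lambda_j=0$ with $\lambda_j\ge1$ to rule out all-nonnegative coefficients, which forces $\sum_j|c_j|>1$. Your explicit observation that $\|c\|_1=\sum_j c_j$ holds exactly when every $c_j\ge0$ simply spells out the strictness step the paper states more tersely.
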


\begin{proof}
Since $\lambda_j^q\ge1$, $\sum_j c_j^2\lambda_j^q/\pi_j\ge\sum_j c_j^2/\pi_j$. Cauchy--Schwarz gives $(\sum_j|c_j|)^2\le\sum_j c_j^2/\pi_j$ because $\sum_j\pi_j=1$. A nontrivial Richardson rule with $\lambda_j\ge1$ cannot have all $c_j\ge0$; otherwise $\sum_jc_j\lambda_j\ge\sum_jc_j=1$, contradicting $\sum_jc_j\lambda_j=0$. Hence $\sum_j|c_j|>1$ and the bracket in $K_{q,k}$ is strictly positive.
\end{proof}

Thus, for ordinary independent-sampling Richardson ZNE, the leading variance penalty is positive automatically. If a more general protocol changes the covariance structure or the leading variance expansion, $K_q>0$ should be checked for the resulting effective penalty.

For first-order Richardson with scale factors $(1,a)$, $a>1$, the coefficients are $c_0=a/(a-1)$ and $c_1=-1/(a-1)$. For the common choice $(1,3)$ with uniform allocation, $K_{q,1}=\nu[7/2+3^q/2]>0$, so the positive-variance-penalty condition is automatic for this standard two-point protocol.

Combining the leading bias benefit and variance penalty gives, in the perturbative regime,
\[
\Delta_{\mathrm{MSE}}(\epsilon,B)
=\alpha^2\epsilon^2-\frac{K_{q,k}\epsilon^q}{B}+R(\epsilon,B),
\]
where A5 gives a controlled remainder of order
\[
R(\epsilon,B)=O(\epsilon^{2+\delta_b})+O\!\left(\frac{\epsilon^{q+\delta_v}}{B}\right)
\]
for positive local exponents $\delta_b,\delta_v$ whenever the fixed-Richardson comparison has $p=1$.

\begin{proposition}[Primitive conditions imply the leading MSE-balance]
\label{prop:primitive-to-balance}
Suppose the unmitigated and mitigated biases admit local expansions
\[
\mathrm{Bias}_{\mathrm{noisy}}(\epsilon)=a_p\epsilon^p+O(\epsilon^{p+\delta_b}),
\qquad
\mathrm{Bias}_{\mathrm{ZNE}}(\epsilon)=\widetilde a_p\epsilon^p+O(\epsilon^{p+\delta_b}),
\]
with $p\ge1$, $\delta_b>0$, and $D_p:=a_p^2-\widetilde a_p^{\,2}>0$. Then
\[
D(\epsilon):=\mathrm{Bias}_{\mathrm{noisy}}(\epsilon)^2-\mathrm{Bias}_{\mathrm{ZNE}}(\epsilon)^2
=D_p\epsilon^{2p}+O(\epsilon^{2p+\delta_b}).
\]
If, in addition, the effective variance satisfies $v(\epsilon)=\nu\epsilon^q+O(\epsilon^{q+\delta_v})$ uniformly over the finite scale set, then fixed independent-sampling Richardson gives
\[
A_k(\epsilon)=K_{q,k}\epsilon^q+O(\epsilon^{q+\delta_v}),
\]
with $K_{q,k}$ as defined above. Consequently A5 follows from primitive local expansions of the two biases and the effective variance.
\end{proposition}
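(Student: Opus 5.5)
The plan is direct substitution of the assumed expansions, treating the bias part and the variance part separately and then combining them into A5.

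\emph{Bias part.} Write $\mathrm{Bias}_{\mathrm{noisy}}=\epsilon^p(a_p+r_1(\epsilon))$ and $\mathrm{Bias}_{\mathrm{ZNE}}=\epsilon^p(\widetilde a_p+r_2(\epsilon))$, where $r_i=O(\epsilon^{\delta_b})$. Squaring and subtracting gives
\[
D(\epsilon)=\epsilon^{2p}\bigl(a_p^2-\widetilde a_p^{\,2}\bigr)+\epsilon^{2p}\bigl(2a_pr_1-2\widetilde a_pr_2+r_1^2-r_2^2\bigr).
\]
On a bounded neighbourhood $0<\epsilon\le\epsilon_0$, the second bracket is $O(\epsilon^{\delta_b})$: the cross terms are $O(\epsilon^{\delta_b})$ directly, and the squares are $O(\epsilon^{2\delta_b})\subseteq O(\epsilon^{\delta_b})$. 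This gives $D(\epsilon)=D_p\epsilon^{2p}+O(\epsilon^{2p+\delta_b})$. The hypothesis $D_p>0$ is not needed for the expansion itself; it only identifies the leading term as a genuine improvement.

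\emph{Variance part.} Under A2 and A3 the estimators at distinct scale factors are independent sample means. Hence $\mathrm{Var}_{\mathrm{ZNE}}=B^{-1}\sum_j c_j^2v(\lambda_j\epsilon)/\pi_j$, as computed before Proposition~\ref{prop:positive-variance-penalty}. Substituting $v(\lambda_j\epsilon)=\nu\lambda_j^q\epsilon^q+O((\lambda_j\epsilon)^{q+\delta_v})$, the expansion is uniform over the finite scale set by hypothesis. Because each $\lambda_j$ is a fixed constant, $(\lambda_j\epsilon)^{q+\delta_v}$ is a constant multiple of $\epsilon^{q+\delta_v}$. A finite sum with fixed weights $c_j^2/\pi_j$ therefore keeps the remainder $O(\epsilon^{q+\delta_v})$. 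Subtracting $v(\epsilon)=\nu\epsilon^q+O(\epsilon^{q+\delta_v})$ gives $A_k(\epsilon)=K_{q,k}\epsilon^q+O(\epsilon^{q+\delta_v})$ with exactly the stated $K_{q,k}$. Proposition~\ref{prop:positive-variance-penalty} then supplies $K_{q,k}>0$, so the identification $K_q=K_{q,k}$ fits the form A5 requires.

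\emph{Combining.} From the MSE decomposition,
\[
\Delta_{\mathrm{MSE}}=D(\epsilon)-A_k(\epsilon)/B=D_p\epsilon^{2p}-K_{q,k}\epsilon^q/B+R_p,
\]
where $R_p=O(\epsilon^{2p+\delta_b})+O(\epsilon^{q+\delta_v}/B)$. This is the A5 balance.

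The only delicate step is the remainder bookkeeping. First, the $O(\cdot)$ constants must be uniform in $B$; they are, because neither $D$ nor $A_k$ depends on $B$ under fixed allocation fractions $\pi_j$. Second, A5 asks for "differentiable power-law remainders." The $O$-bounds alone do not give this. If differentiability of $R_p$ is needed, I would additionally assume the primitive remainders satisfy the corresponding derivative bounds $O(\epsilon^{\,\cdot\,-1})$. These then propagate through the same finite algebraic operations, namely products, squares and finite linear combinations, by the product rule.
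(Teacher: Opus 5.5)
Your proof is correct and follows the same route as the paper: square and subtract the two bias expansions, then substitute the uniform variance expansion into $A_k(\epsilon)=\sum_j c_j^2v(\lambda_j\epsilon)/\pi_j-v(\epsilon)$ and absorb the finitely many remainders, which is exactly what the paper's two-line proof does. Your remainder bookkeeping is more explicit than the paper's. Your caveat is also correct: the paper's proof never addresses the ``differentiable remainder'' clause of A5, so the extra derivative bounds you propose are genuinely needed for that part of the conclusion.
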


\begin{proof}
Squaring the two bias expansions gives $D(\epsilon)=(a_p^2-\widetilde a_p^{\,2})\epsilon^{2p}+O(\epsilon^{2p+\delta_b})$. Substituting $v(\lambda_j\epsilon)=\nu\lambda_j^q\epsilon^q+O(\epsilon^{q+\delta_v})$ into $A_k(\epsilon)=\sum_j c_j^2v(\lambda_j\epsilon)/\pi_j-v(\epsilon)$ gives the stated variance expansion.
\end{proof}

\begin{remark}[Richardson order changes constants, not the exponent]
For the help-harm boundary defined against the unmitigated estimator, increasing the Richardson order $k$ does not change the local exponent when $\alpha\ne0$. The reason is that the leading squared-bias improvement remains $\alpha^2\epsilon^2$: the noisy estimator still has first-order bias, while the mitigated bias is higher order. A higher-order Richardson rule changes the variance constant $K_{q,k}$ through its coefficients and scale factors, but the balance remains $\epsilon^2\sim\epsilon^q/B$. The exponent $-1/(2(k+1)-q)$ would arise in a different comparison, for example by balancing the residual ZNE bias $O(\epsilon^{2(k+1)})$ against ZNE variance, not in the present noisy-versus-ZNE MSE boundary.
\end{remark}

\section{Master theorem for local help-harm boundaries}
\label{sec:master-theorem}

The preceding Richardson calculation is an instance of a more general local MSE comparison. The mathematical object that controls the lower boundary is not the extrapolation order by itself, but the first nonzero power in the squared-bias improvement and the first nonzero power in the excess-variance penalty.

For $0\le q<2p$, set $r=(2p-q)^{-1}$. Given a candidate constant $C>0$ and constants $0<m<C<M<\infty$, define the rescaled perturbative window
\[
\mathcal W_B(m,M)=\{\epsilon=xB^{-r}:x\in[m,M]\}.
\]
The theorem proves existence and uniqueness of a sign-changing local boundary $\epsilon^*_{\mathrm{loc}}(B)$ inside this window, with $C=C_{p,q}$, for all sufficiently large $B$. Global crossings outside the perturbative window are not part of the claim. When no confusion is possible, we write $\epsilon^*(B)$ for this local boundary.

\begin{lemma}[Stability of the balanced local root]
\label{lem:root-stability}
Let
\[
f_{p,q}(x)=D_px^{2p}-K_qx^q,
\qquad
D_p>0,
\qquad
K_q>0,
\qquad
p\ge1,
\qquad
0\le q<2p.
\]
Then $f_{p,q}$ has the simple positive root in Eq.~\eqref{eq:master-constant}:
\begin{equation}
C_{p,q}=\left(\frac{K_q}{D_p}\right)^{1/(2p-q)}.
\label{eq:master-constant}
\end{equation}
If $G_B\to f_{p,q}$ and $G'_B\to f'_{p,q}$ uniformly on a compact neighborhood of $C_{p,q}$, then, for all sufficiently large $B$, $G_B$ has a unique local root $x_B$ in that neighborhood and $x_B\to C_{p,q}$. If the two uniform convergences hold at rate $O(B^{-\eta})$, then
\[
x_B=C_{p,q}+O(B^{-\eta}).
\]
\end{lemma}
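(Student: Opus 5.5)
The plan has two parts: first analyze $f_{p,q}$ directly, then carry its simple root over to the perturbed functions $G_B$ by a quantitative implicit-function (inverse-function) argument in one variable.

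\emph{Step 1: the root of $f_{p,q}$.} For $x>0$ I factor $f_{p,q}(x)=x^q\bigl(D_px^{2p-q}-K_q\bigr)$. Since $x^q>0$ and $2p-q>0$, the bracket is strictly increasing, so the only positive root is $x^{2p-q}=K_q/D_p$, i.e.\ $x=C_{p,q}$ as in Eq.~\eqref{eq:master-constant}. Differentiating gives
\[
f'_{p,q}(x)=2pD_px^{2p-1}-qK_qx^{q-1}.
\]
Substituting $D_pC_{p,q}^{2p-q}=K_q$ yields $f'_{p,q}(C_{p,q})=(2p-q)K_qC_{p,q}^{q-1}>0$, so the root is simple. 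Moreover $f_{p,q}<0$ on $(0,C_{p,q})$ and $f_{p,q}>0$ on $(C_{p,q},\infty)$, which is the sign change later read as harm/help.

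\emph{Step 2: choice of neighborhood.} Using continuity of $f'_{p,q}$ on $(0,\infty)$, I fix $\rho>0$ such that $[C_{p,q}-\rho,C_{p,q}+\rho]\subset(0,\infty)$ lies in the given compact neighborhood. I also require $f'_{p,q}\ge\kappa:=\tfrac12 f'_{p,q}(C_{p,q})>0$ on this interval. The neighborhood in the statement should be read as such an interval; if a larger compact set is meant, uniqueness must be restricted to a small interval around $C_{p,q}$, since $f_{p,q}$ has no other positive root but the compact set could touch $0$ when $q=0$. I would state this restriction explicitly.

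\emph{Step 3: existence, uniqueness and convergence.} Uniform convergence $G_B'\to f'_{p,q}$ gives $G_B'\ge\kappa/2$ on the interval for large $B$, so $G_B$ is strictly increasing there and has at most one root. Next, $f_{p,q}(C_{p,q}\pm\rho)=\pm\sigma_\pm$ with $\sigma_\pm\ge\kappa\rho$ by the mean value theorem. Uniform convergence $G_B\to f_{p,q}$ then gives $G_B(C_{p,q}-\rho)<0<G_B(C_{p,q}+\rho)$ for large $B$, and the intermediate value theorem supplies a root $x_B$ that is unique and sign-changing. Repeating the argument with any smaller $\rho'\le\rho$ shows $x_B\to C_{p,q}$.

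\emph{Step 4: rate.} Set $\eta_B:=\sup|G_B-f_{p,q}|=O(B^{-\eta})$. By the mean value theorem,
\[
\kappa|x_B-C_{p,q}|\le|f_{p,q}(x_B)-f_{p,q}(C_{p,q})|=|f_{p,q}(x_B)-G_B(x_B)|\le\eta_B,
\]
so $|x_B-C_{p,q}|\le\eta_B/\kappa=O(B^{-\eta})$. Only the rate of $G_B\to f_{p,q}$ enters here; derivative convergence is needed only qualitatively, for uniqueness.

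The main obstacle is bookkeeping rather than analysis: making the neighborhood small enough that $f'_{p,q}$ is bounded away from zero, and noting that uniqueness holds only locally. Everything else is the standard perturbation argument for a simple root.
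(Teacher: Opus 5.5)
Your proof is correct and follows the paper's route: you verify the simple root via $f'_{p,q}(C_{p,q})>0$, pick a small interval where $f'_{p,q}$ is bounded below and $f_{p,q}$ changes sign, transfer the sign and monotonicity to $G_B$ by uniform convergence, and linearize for the rate. Your mean-value bound $|x_B-C_{p,q}|\le\sup|G_B-f_{p,q}|/\kappa$ is a slightly tidier version of the paper's Taylor step, since there is no quadratic term to absorb and it shows that only the rate of $G_B\to f_{p,q}$ matters. One small slip in your Step 2 side remark: the extra zero of $f_{p,q}$ at $x=0$ occurs when $q>0$, where $f_{p,q}(0)=0$, and not when $q=0$, where $f_{p,q}(0)=-K_q$.
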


\begin{proof}
The positive root is obtained from $D_px^{2p}=K_qx^q$, hence
\[
C_{p,q}=\left(\frac{K_q}{D_p}\right)^{1/(2p-q)}.
\]
At this root, $K_qC_{p,q}^q=D_pC_{p,q}^{2p}$, so
\[
f'_{p,q}(C_{p,q})
=2pD_pC_{p,q}^{2p-1}-qK_qC_{p,q}^{q-1}
=D_p(2p-q)C_{p,q}^{2p-1}>0.
\]
Thus the root is simple. Choose a small interval $U$ around $C_{p,q}$ on which $f'_{p,q}$ is positive and whose endpoints have opposite signs under $f_{p,q}$. Uniform convergence preserves these signs and monotonicity for large $B$, giving existence and local uniqueness of a root $x_B\in U$. Every convergent subsequence of such roots must converge to the only zero of $f_{p,q}$ in $U$, so $x_B\to C_{p,q}$. If the convergence of $G_B$ and $G'_B$ is $O(B^{-\eta})$, Taylor expansion around the simple root gives
\[
0=G_B(x_B)=f'_{p,q}(C_{p,q})(x_B-C_{p,q})+O((x_B-C_{p,q})^2)+O(B^{-\eta}),
\]
which implies $x_B-C_{p,q}=O(B^{-\eta})$ after shrinking $U$ if necessary.
\end{proof}

\begin{theorem}[Master local help-harm theorem]
\label{thm:master}
Suppose that, in a perturbative neighborhood of $\epsilon=0$, the MSE difference admits the leading MSE-balance in Eq.~\eqref{eq:leading-mse-balance}:
\begin{equation}
\Delta_{\mathrm{MSE}}(\epsilon,B)
=D_p\epsilon^{2p}-\frac{K_q\epsilon^q}{B}+R_p(\epsilon,B).
\label{eq:leading-mse-balance}
\end{equation}
with
\[
D_p>0,
\qquad
K_q>0,
\qquad
p\ge1,
\qquad
q\ge0.
\]
Assume the local uniform remainder and derivative conditions described in A5.

\textup{(i)} If $0\le q<2p$, define
\[
C_{p,q}=\left(\frac{K_q}{D_p}\right)^{1/(2p-q)}.
\]
Then, for any fixed $0<m<C_{p,q}<M<\infty$, there is a unique sign-changing local perturbative crossing $\epsilon^*_{\mathrm{loc}}(B)$ in the rescaled window $\mathcal W_B(m,M)$, and it lies in a neighborhood of
\[
C_{p,q}B^{-1/(2p-q)}
\]
and
\[
\frac{\epsilon^*_{\mathrm{loc}}(B)}{C_{p,q}B^{-1/(2p-q)}}\to1.
\]
Moreover, the crossing is sign-changing in the rescaled perturbative window: for $\epsilon=xB^{-1/(2p-q)}$ with $x$ in any compact subset of $(0,\infty)$ not containing $C_{p,q}$, $\Delta_{\mathrm{MSE}}(\epsilon,B)<0$ when $x<C_{p,q}$ and $\Delta_{\mathrm{MSE}}(\epsilon,B)>0$ when $x>C_{p,q}$, for all sufficiently large $B$. Thus ZNE harms below the lower local boundary and helps above it within the perturbative regime.

If the rescaled remainder and derivative convergence in A5 hold at rate $O(B^{-\eta})$, then the sharper expansion holds:
\[
\epsilon^*_{\mathrm{loc}}(B)
=C_{p,q}B^{-1/(2p-q)}\left[1+O(B^{-\eta})\right].
\]

\textup{(ii)} If $q=2p$, then no nontrivial lower boundary of the form $\epsilon^*(B)\to0$ is generated by the leading balance. Instead,
\[
\Delta_{\mathrm{MSE}}(\epsilon,B)
=\left(D_p-\frac{K_q}{B}\right)\epsilon^{2p}+o(\epsilon^{2p}),
\]
and the local sign is governed by the budget threshold
\[
B^*=\frac{K_q}{D_p}.
\]
For $B>B^*$, ZNE helps for all sufficiently small positive $\epsilon$; for $B<B^*$, ZNE harms for all sufficiently small positive $\epsilon$; and the threshold case $B=B^*$ is decided by higher-order terms.

\textup{(iii)} If $q>2p$, then the leading variance penalty vanishes too rapidly near zero to create a shrinking lower help-harm boundary. Under the same local regularity, for each sufficiently large fixed $B$,
\[
\Delta_{\mathrm{MSE}}(\epsilon,B)>0
\]
for all sufficiently small positive $\epsilon$, unless higher-order terms outside the displayed leading balance change the sign.
\end{theorem}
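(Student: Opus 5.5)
The plan is to rescale the perturbative variable so that both leading terms of Eq.~\eqref{eq:leading-mse-balance} are of the same size, and then invoke Lemma~\ref{lem:root-stability}.

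\textbf{Part (i).} For $0\le q<2p$, set $r=(2p-q)^{-1}$ and write $\epsilon=xB^{-r}$ with $x\in[m,M]$. Because $2pr=qr+1$, we have $\epsilon^{2p}=x^{2p}B^{-2pr}$ and $\epsilon^q/B=x^qB^{-2pr}$. Define the rescaled function
\[
G_B(x)=B^{2pr}\Delta_{\mathrm{MSE}}(xB^{-r},B)=f_{p,q}(x)+B^{2pr}R_p(xB^{-r},B).
\]
The A5 remainder bounds give
\[
B^{2pr}O(\epsilon^{2p+\delta_b})=O(x^{2p+\delta_b}B^{-r\delta_b}),
\qquad
B^{2pr}O(\epsilon^{q+\delta_v}/B)=O(x^{q+\delta_v}B^{-r\delta_v}).
\]
Both are uniform on the compact set $[m,M]$. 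Hence $G_B\to f_{p,q}$ uniformly at rate $O(B^{-\eta})$ with $\eta=r\min(\delta_b,\delta_v)$.

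For the derivative, I will use the assumption in A5 that the remainders are differentiable power laws. Concretely, $\partial_\epsilon$ of an $O(\epsilon^{s})$ remainder is $O(\epsilon^{s-1})$. The chain-rule factor $B^{-r}$ then gives $G_B'\to f_{p,q}'$ uniformly at the same rate. This is the step I expect to be the main obstacle, since everything depends on reading A5 as derivative control. I would state that interpretation explicitly and note that it is exactly what Lemma~\ref{lem:root-stability} consumes.

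Lemma~\ref{lem:root-stability} then gives a unique root $x_B\to C_{p,q}$ near $C_{p,q}$, with $x_B=C_{p,q}+O(B^{-\eta})$. Setting $\epsilon^*_{\mathrm{loc}}(B)=x_BB^{-r}$ yields the ratio limit and the sharper expansion.

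To get uniqueness on the whole window $[m,M]$, not just near $C_{p,q}$, I use that $f_{p,q}(x)=x^q(D_px^{2p-q}-K_q)$ is bounded away from zero on $[m,M]$ minus the small neighborhood $U$. Uniform convergence then shows $G_B$ has no zeros there for large $B$. The same argument gives the sign statement on any compact subset avoiding $C_{p,q}$: $f_{p,q}<0$ for $x<C_{p,q}$ and $f_{p,q}>0$ for $x>C_{p,q}$, and $\Delta_{\mathrm{MSE}}$ has the same sign as $G_B$ because $B^{2pr}>0$.

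\textbf{Part (ii).} For $q=2p$, factor $\epsilon^{2p}$ out of Eq.~\eqref{eq:leading-mse-balance}. The remainder is $O(\epsilon^{2p+\delta_b})+O(\epsilon^{2p+\delta_v}/B)$, which is $o(\epsilon^{2p})$ uniformly for $B$ bounded below. This gives the displayed form $(D_p-K_q/B)\epsilon^{2p}+o(\epsilon^{2p})$.
\begin{itemize}
\item If $B\neq B^*$, the coefficient $D_p-K_q/B$ is a fixed nonzero number. For small $\epsilon$ the $o(\epsilon^{2p})$ term cannot overturn its sign, so ZNE helps for $B>B^*$ and harms for $B<B^*$.
\item At $B=B^*$ the leading term vanishes, and the sign is left to higher-order terms.
\item No shrinking lower boundary exists, because the sign near $0$ does not depend on $\epsilon$.
\end{itemize}

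\textbf{Part (iii).} For $q>2p$ and fixed $B$, write
\[
\Delta_{\mathrm{MSE}}=\epsilon^{2p}\bigl[D_p-K_q\epsilon^{q-2p}/B+O(\epsilon^{\delta_b})+O(\epsilon^{q-2p+\delta_v}/B)\bigr].
\]
The bracket tends to $D_p>0$ as $\epsilon\to0$, so $\Delta_{\mathrm{MSE}}>0$ for all sufficiently small $\epsilon$. This holds within the displayed balance. The caveat about higher-order terms covers the case where $\delta_b$-type terms are not truly subleading, and I would record that as the stated proviso rather than prove anything more.
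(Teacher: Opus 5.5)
Your proposal is correct and follows the paper's proof essentially step for step: the rescaling $\epsilon=xB^{-r}$ with $2pr=qr+1$, Lemma~\ref{lem:root-stability} for the local root and its rate, uniform convergence of the rescaled function for the sign statement, and factoring out $\epsilon^{2p}$ in cases (ii) and (iii). Your extra step, showing that $f_{p,q}$ is bounded away from zero on $[m,M]$ outside the neighborhood of $C_{p,q}$, upgrades the lemma's local uniqueness to uniqueness on the whole window $\mathcal W_B(m,M)$; the paper's proof leaves this implicit. Your derived rate $\eta=r\min\{\delta_b,\delta_v\}$ matches Remark~\ref{rem:explicit-convergence-rate}.
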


\begin{proof}
For case (i), set
\[
r=\frac1{2p-q},
\qquad
\epsilon=xB^{-r}.
\]
Then $2pr=qr+1$, and therefore
\[
B^{2pr}\Delta_{\mathrm{MSE}}(xB^{-r},B)
=D_px^{2p}-K_qx^q+\eta_B(x),
\]
where A5 gives uniform convergence of $\eta_B$ and its derivative near the limiting root. Lemma~\ref{lem:root-stability} gives the local root
\[
x_B=C_{p,q}+o(1),
\]
and hence
\[
\epsilon^*_{\mathrm{loc}}(B)=x_BB^{-r}
\sim C_{p,q}B^{-1/(2p-q)}.
\]
The rate statement follows from the rate part of Lemma~\ref{lem:root-stability}. The sign statement follows from the same uniform convergence: the limiting function $f_{p,q}(x)=D_px^{2p}-K_qx^q$ is negative for $0<x<C_{p,q}$ and positive for $x>C_{p,q}$.

For case (ii), substitute $q=2p$ into the leading expansion:
\[
\Delta_{\mathrm{MSE}}(\epsilon,B)
=\left(D_p-\frac{K_q}{B}\right)\epsilon^{2p}+o(\epsilon^{2p}).
\]
The sign of the leading coefficient yields the threshold $B^*=K_q/D_p$.

For case (iii), write
\[
\Delta_{\mathrm{MSE}}(\epsilon,B)
=\epsilon^{2p}\left[D_p-\frac{K_q}{B}\epsilon^{q-2p}+o(1)\right].
\]
Since $q-2p>0$, the term $B^{-1}\epsilon^{q-2p}$ tends to zero as $\epsilon\downarrow0$ for fixed large $B$. The bracket is therefore positive for sufficiently small positive $\epsilon$, which rules out the same leading-order shrinking lower boundary.
\end{proof}

\begin{proposition}[Local optimality of the boundary scale]
\label{prop:local-optimality}
Assume the hypotheses of Theorem~\ref{thm:master} in the subcritical regime $0\le q<2p$. Let
\[
r=\frac1{2p-q}.
\]
For any sequence $\epsilon_B>0$ such that $\epsilon_B\to0$ and $\epsilon_B B^r\to0$,
\[
\Delta_{\mathrm{MSE}}(\epsilon_B,B)<0
\]
for all sufficiently large $B$. Consequently, within any estimator class whose MSE difference has the same leading balance with $D_p>0$ and $K_q>0$, a sign-changing lower help-harm boundary cannot occur at a scale $o(B^{-1/(2p-q)})$.
\end{proposition}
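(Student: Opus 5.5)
The plan is to work with the unrescaled leading balance in Eq.~\eqref{eq:leading-mse-balance} along the given sequence and show that the negative variance term dominates everything else. Write $x_B:=\epsilon_BB^r\to0$. Since $2pr=qr+1$, the two leading terms satisfy
\[
\frac{D_p\epsilon_B^{2p}}{K_q\epsilon_B^q/B}=\frac{D_p}{K_q}\,B\,\epsilon_B^{2p-q}=\frac{D_p}{K_q}\,x_B^{2p-q}\to0,
\]
because $2p-q>0$. So the bias improvement is $o(K_q\epsilon_B^q/B)$. I will therefore factor out the variance penalty and write
\[
\Delta_{\mathrm{MSE}}(\epsilon_B,B)=-\frac{K_q\epsilon_B^q}{B}\Bigl[1-\tfrac{D_p}{K_q}x_B^{2p-q}-\tfrac{B}{K_q\epsilon_B^q}R_p(\epsilon_B,B)\Bigr].
\]
The claim then reduces to showing that the bracket tends to $1$.

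Next, control the remainder using A5, which gives $R_p=O(\epsilon^{2p+\delta_b})+O(\epsilon^{q+\delta_v}/B)$ uniformly for small $\epsilon$. This applies along the sequence because $\epsilon_B\to0$. The two pieces behave as follows.
\begin{itemize}
\item The variance piece, divided by $K_q\epsilon_B^q/B$, is $O(\epsilon_B^{\delta_v})\to0$.
\item The bias piece, divided by $K_q\epsilon_B^q/B$, is $O(B\epsilon_B^{2p-q+\delta_b})=O(x_B^{2p-q}\epsilon_B^{\delta_b})\to0$.
\end{itemize}
Hence the bracket is $1+o(1)$. Since $K_q>0$ and $\epsilon_B>0$, this gives $\Delta_{\mathrm{MSE}}(\epsilon_B,B)<0$ for all large $B$.

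The consequence follows by contradiction. Suppose a sign-changing lower boundary $\epsilon^*(B)=o(B^{-r})$ existed. Then on the region just above it, at scales still $o(B^{-r})$, $\Delta_{\mathrm{MSE}}$ would have to be positive. Choosing $\epsilon_B$ in that region, for instance slightly above $\epsilon^*(B)$ with $\epsilon_BB^r\to0$, produces a sequence that contradicts the first part. This argument uses only the leading balance, so it applies to any estimator class sharing it.

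The main obstacle is the uniformity of the A5 remainder in the joint limit $(\epsilon,B)$, particularly when $x_B\to0$. The rescaled form $B^{2pr}\Delta=D_px^{2p}-K_qx^q+\eta_B(x)$ from the proof of Theorem~\ref{thm:master} is only controlled on compact subsets of $(0,\infty)$. It is therefore not usable near $x=0$, where $K_qx^q$ itself vanishes (and when $q=0$ the comparison must be made with the constant $K_q$). For this reason I avoid the rescaled window and use directly the stated $O$-bounds on $R_p$, reading them as uniform in $B$ for small $\epsilon$. Under that reading the ratio estimates above are the entire argument.
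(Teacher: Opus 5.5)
Your proposal is correct and follows the paper's proof: divide $\Delta_{\mathrm{MSE}}(\epsilon_B,B)$ by $\epsilon_B^q/B$, use $(2p-q)r=1$ to get the normalized bias gain $D_p(\epsilon_BB^r)^{2p-q}\to0$, note that the two A5 remainders are smaller by factors $\epsilon_B^{\delta_b}$ and $\epsilon_B^{\delta_v}$, and conclude $-K_q+o(1)<0$. Your contradiction argument for the consequence, and your remark that the A5 $O$-bounds must hold uniformly in $B$ for small $\epsilon$, make explicit what the paper uses only implicitly.
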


\begin{proof}
By A5,
\[
\Delta_{\mathrm{MSE}}(\epsilon_B,B)
=D_p\epsilon_B^{2p}-\frac{K_q\epsilon_B^q}{B}
+O(\epsilon_B^{2p+\delta_b})+O(\epsilon_B^{q+\delta_v}/B).
\]
Divide by $\epsilon_B^q/B$. Since $\epsilon_B B^r\to0$ and $(2p-q)r=1$,
\[
\frac{D_p\epsilon_B^{2p}}{\epsilon_B^q/B}
=D_p B\epsilon_B^{2p-q}
=D_p(\epsilon_B B^r)^{2p-q}\to0.
\]
The normalized bias remainder is smaller by a factor $\epsilon_B^{\delta_b}$, and the normalized variance remainder is smaller by a factor $\epsilon_B^{\delta_v}$. Hence
\[
\frac{\Delta_{\mathrm{MSE}}(\epsilon_B,B)}{\epsilon_B^q/B}
=-K_q+o(1)<0,
\]
which proves the claim. For $q=0$, the same division is by $1/B$ and the displayed calculation remains valid.
\end{proof}

\begin{remark}[Explicit convergence rate]
\label{rem:explicit-convergence-rate}
Substituting $\epsilon=xB^{-r}$ into the two remainder terms of A5 gives
$B^{2pr}O(\epsilon^{2p+\delta_b})=O(B^{-r\delta_b})$ and
$B^{2pr}O(\epsilon^{q+\delta_v}/B)=O(B^{-r\delta_v})$,
so the convergence required in Theorem~\ref{thm:master}
holds at rate $O(B^{-\eta})$ with
\[
\eta=\frac{\min\{\delta_b,\delta_v\}}{2p-q}.
\]
The sharper expansion of Theorem~\ref{thm:master}(i)
follows with this value of~$\eta$.
\end{remark}

\begin{proposition}[Finite-budget bracketing of the local boundary]
\label{prop:finite-budget-bracket}
Assume the subcritical regime $0\le q<2p$ and suppose that, for $0<\epsilon\le\epsilon_0$,
\[
|R_p(\epsilon,B)|
\le L_b\epsilon^{2p+\delta_b}+L_v\frac{\epsilon^{q+\delta_v}}{B}
\]
with $L_b,L_v,\delta_b,\delta_v>0$. Let
\[
C=C_{p,q}=\left(\frac{K_q}{D_p}\right)^{1/(2p-q)},
\qquad r=\frac1{2p-q},
\]
and fix $0<\rho<1$. Define $x_-=(1-\rho)C$, $x_+=(1+\rho)C$, and
\[
\mathfrak m_\rho=
\min\{K_qx_-^q-D_px_-^{2p},\;D_px_+^{2p}-K_qx_+^q\}>0.
\]
If
\[
B\ge B_0(\rho):=\max\left\{
\left(\frac{4L_bx_+^{2p+\delta_b}}{\mathfrak m_\rho}\right)^{1/(r\delta_b)},
\left(\frac{4L_vx_+^{q+\delta_v}}{\mathfrak m_\rho}\right)^{1/(r\delta_v)},
\left(\frac{x_+}{\epsilon_0}\right)^{1/r}
\right\},
\]
then the MSE difference has opposite signs at $x_-B^{-r}$ and $x_+B^{-r}$:
\[
\Delta_{\mathrm{MSE}}(x_-B^{-r},B)<0,
\qquad
\Delta_{\mathrm{MSE}}(x_+B^{-r},B)>0.
\]
Consequently at least one sign-changing perturbative crossing lies in the finite-budget bracket in Eq.~\eqref{eq:finite-budget-bracket}:
\begin{equation}
(1-\rho)CB^{-r}\le \epsilon^*_{\mathrm{loc}}(B)
\le (1+\rho)CB^{-r}.
\label{eq:finite-budget-bracket}
\end{equation}
If, in addition, the derivative remainder is bounded on $[x_-,x_+]$ by less than half the minimum of $f'_{p,q}$ on that interval whenever $f'_{p,q}>0$ there, the crossing in this bracket is unique.
\end{proposition}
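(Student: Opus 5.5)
The plan is to work directly in the rescaled variable $\epsilon=xB^{-r}$, as in the proof of Theorem~\ref{thm:master}(i), but now with explicit constants instead of limits. Using $2pr=qr+1$, multiply the MSE difference by $B^{2pr}$:
\[
B^{2pr}\Delta_{\mathrm{MSE}}(xB^{-r},B)=f_{p,q}(x)+B^{2pr}R_p(xB^{-r},B).
\]
Here $f_{p,q}(x)=D_px^{2p}-K_qx^q$ is the function of Lemma~\ref{lem:root-stability}. The sign of $\Delta_{\mathrm{MSE}}$ at $x_\pm B^{-r}$ equals the sign of this right-hand side. So it suffices to show that the rescaled remainder is strictly smaller than $\mathfrak m_\rho$ in absolute value at $x=x_\pm$.

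The first step is to check that $\mathfrak m_\rho>0$. Write $f_{p,q}(x)=x^q(D_px^{2p-q}-K_q)$. The factor $D_px^{2p-q}-K_q$ is strictly increasing in $x$ (because $2p-q>0$) and vanishes exactly at $C$. Hence $f_{p,q}(x_-)<0$ and $f_{p,q}(x_+)>0$, which is precisely the positivity of both entries in the minimum.

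The second step bounds the remainder, using the assumed inequality. This requires $\epsilon\le\epsilon_0$, and for $x\le x_+$ this is guaranteed by the third term of $B_0(\rho)$, namely $x_+B^{-r}\le\epsilon_0$. Rescaling each term as in Remark~\ref{rem:explicit-convergence-rate}:
\[
B^{2pr}L_bx^{2p+\delta_b}B^{-r(2p+\delta_b)}=L_bx^{2p+\delta_b}B^{-r\delta_b},
\qquad
B^{2pr}L_vx^{q+\delta_v}B^{-rq-r\delta_v-1}=L_vx^{q+\delta_v}B^{-r\delta_v}.
\]
Both bounds are monotone in $x$, so for $x\in\{x_-,x_+\}$ we may replace $x$ by $x_+$. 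The first two terms of $B_0(\rho)$ then make each piece at most $\mathfrak m_\rho/4$. The total rescaled remainder is therefore at most $\mathfrak m_\rho/2<\mathfrak m_\rho$, which gives the two claimed strict signs.

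The third step concerns the crossing itself. On $[x_-,x_+]$ the map $x\mapsto\Delta_{\mathrm{MSE}}(xB^{-r},B)$ is continuous, since A5 assumes differentiable remainders. The intermediate value theorem then yields a sign-changing root in the bracket, which is Eq.~\eqref{eq:finite-budget-bracket}.

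For uniqueness, assume $f'_{p,q}>0$ on $[x_-,x_+]$ and that the $x$-derivative of the rescaled remainder is bounded by half of $\min f'_{p,q}$ there. Then the derivative of the rescaled difference is at least $\tfrac12\min f'_{p,q}>0$. The function is therefore strictly increasing on the bracket and has exactly one zero.

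The main obstacle is bookkeeping rather than conceptual. One must keep track of the monotonicity in $x$ used to replace $x_-$ by $x_+$, and check that the exponent identity $2pr=qr+1$ absorbs the $1/B$ in the variance remainder. The only delicate point is that $\epsilon_0$ must cover both evaluation points, which holds since $x_-<x_+$.
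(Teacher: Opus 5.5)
Your proposal is correct and follows the same route as the paper's proof. You rescale by $\epsilon=xB^{-r}$, use the three terms of $B_0(\rho)$ to bound the rescaled remainder at $x_\pm$ by $\mathfrak m_\rho/2$, and conclude by the intermediate value theorem, with uniqueness from the derivative margin; along the way you make explicit the positivity of $\mathfrak m_\rho$, the monotonicity in $x$ that lets you evaluate both bounds at $x_+$, and the $\epsilon_0$ check, which the paper leaves implicit.
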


\begin{proof}
For $\epsilon=xB^{-r}$, multiplication by $B^{2pr}$ gives
\[
B^{2pr}\Delta_{\mathrm{MSE}}(xB^{-r},B)
=f_{p,q}(x)+\eta_B(x),
\]
where
\[
|\eta_B(x)|\le L_bx^{2p+\delta_b}B^{-r\delta_b}
+L_vx^{q+\delta_v}B^{-r\delta_v}.
\]
By the definition of $B_0(\rho)$, $|\eta_B(x_\pm)|\le \mathfrak m_\rho/2$. Since
$f_{p,q}(x_-)\le-\mathfrak m_\rho$ and $f_{p,q}(x_+)\ge\mathfrak m_\rho$, the signs persist at the two endpoints. The intermediate value theorem gives a crossing inside the bracket. The optional uniqueness statement follows from monotonicity of the rescaled MSE difference when the derivative perturbation is smaller than the derivative margin.
\end{proof}

Table~\ref{tab:regimes} summarizes the theorem's three local regimes. The exponent comes from the competition between the first useful squared-bias power and the first harmful finite-shot variance power.

\begin{table}[t]
\caption{Local regimes for the finite-shot help-harm boundary.}
\label{tab:regimes}
\begin{tabular*}{\textwidth}{@{\extracolsep{\fill}}p{0.23\textwidth}p{0.35\textwidth}p{0.34\textwidth}}
\toprule
Regime & Local result & Interpretation \\
\midrule
$0\le q<2p$ & $\epsilon^*(B)\sim (K_q/D_p)^{1/(2p-q)}B^{-1/(2p-q)}$ & Standard shrinking lower boundary \\
$q=2p$ & $B^*=K_q/D_p$ & Budget threshold, not a continuous boundary law \\
$q>2p$ & No leading-order shrinking lower boundary & Variance penalty is too weak near zero \\
\bottomrule
\end{tabular*}
\end{table}

\section{Fixed Richardson ZNE as a corollary}
\label{sec:main-theorem}

The master theorem gives the fixed-Richardson result directly.

\begin{corollary}[Finite-shot help-harm boundary for fixed Richardson ZNE]
\label{cor:richardson-main}
Consider a fixed Richardson rule of order $k\ge1$ with scale factors $\lambda_j$, coefficients $c_j$, and allocation $n_j=\pi_jB$ as in Section~\ref{sec:richardson-general}. Assume A1$(k)$--A5. By Proposition~\ref{prop:positive-variance-penalty}, independent-sampling nontrivial Richardson has $K_{q,k}>0$, where $K_{q,k}$ is the variance-penalty constant of Section~\ref{sec:richardson-general}. If $0\le q<2$, then the lower local perturbative crossing satisfies
\[
\epsilon^*(B)\sim C_{q,k}B^{-1/(2-q)},
\qquad
C_{q,k}=\left(\frac{K_{q,k}}{\alpha^2}\right)^{1/(2-q)}.
\]
If $q=2$, the local behavior is governed by $B^*=K_{2,k}/\alpha^2$; if $q>2$, the leading variance penalty is too small near zero to create the same shrinking lower boundary.
\end{corollary}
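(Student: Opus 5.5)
The plan is to reduce the corollary to Theorem~\ref{thm:master} with $p=1$, $D_1=\alpha^2$ and $K_q=K_{q,k}$. To do this, I need to verify that the fixed-Richardson MSE difference has exactly the leading MSE-balance of Eq.~\eqref{eq:leading-mse-balance} with these constants and a remainder of the form required by A5.

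\textbf{Bias side.} The unmitigated bias is $\mu(\epsilon)-\mu_0=\alpha\epsilon+O(\epsilon^2)$ by A1$(k)$, taking $\lambda_0=1$. The ZNE bias is $O(\epsilon^{k+1})$ with $k\ge1$ by Proposition~\ref{prop:richardson-cancellation}, so it has the form $\widetilde a_1\epsilon+O(\epsilon^{2})$ with $\widetilde a_1=0$. Proposition~\ref{prop:primitive-to-balance}, applied with $p=1$ and $\delta_b=1$, then gives
\[
D(\epsilon)=\alpha^2\epsilon^2+O(\epsilon^3),
\]
with $D_1=\alpha^2>0$ because $\alpha\ne0$.

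\textbf{Variance side.} Because sampling is independent across scale factors (A3) and the allocation $n_j=\pi_jB$ is fixed (A2), the variance formula of Section~\ref{sec:richardson-general} applies. Together with A4 uniformly over the finite scale set, the same proposition yields
\[
A_k(\epsilon)=K_{q,k}\epsilon^q+O(\epsilon^{q+\delta_v}).
\]
Proposition~\ref{prop:positive-variance-penalty} gives $K_{q,k}>0$, since the rule is nontrivial and satisfies $\sum_jc_j=1$, $\sum_jc_j\lambda_j=0$, with $\lambda_j\ge1$.

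Combining the two sides gives
\[
\Delta_{\mathrm{MSE}}=\alpha^2\epsilon^2-\frac{K_{q,k}\epsilon^q}{B}+R,
\]
with $R$ of the A5 form. This is the hypothesis of Theorem~\ref{thm:master} with $p=1$.

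The three regimes then follow directly from the theorem:
\begin{itemize}
\item \emph{Case $0\le q<2$.} This is case (i) with $2p-q=2-q$. It gives
\[
\epsilon^*(B)\sim\left(\frac{K_{q,k}}{\alpha^2}\right)^{1/(2-q)}B^{-1/(2-q)}=C_{q,k}B^{-1/(2-q)}.
\]
The uniqueness and sign-change statements hold in the rescaled window.
\item \emph{Case $q=2$.} This is case (ii), which gives the budget threshold $B^*=K_{2,k}/\alpha^2$.
\item \emph{Case $q>2$.} This is case (iii): there is no shrinking lower boundary.
\end{itemize}

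The main obstacle is the regularity requirement in A5. Theorem~\ref{thm:master} uses uniform convergence of the rescaled remainder \emph{and its derivative}, whereas Proposition~\ref{prop:primitive-to-balance} only delivers $O(\cdot)$ bounds on values. I would handle this by invoking A5 directly: the corollary assumes A1$(k)$--A5, and A5 includes the differentiable power-law remainders. The propositions are then used only to identify the constants $D_1=\alpha^2$ and $K_q=K_{q,k}$ and to confirm their positivity. The $O(\epsilon^{2p+\delta_b})$ term in A5 then matches $O(\epsilon^{3})$, or more generally $\delta_b=1$.
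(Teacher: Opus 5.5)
Your proposal is correct and follows essentially the same route as the paper, which reads off $\Delta_{\mathrm{MSE}}=\alpha^2\epsilon^2-K_{q,k}\epsilon^q/B+R$ from Section~\ref{sec:richardson-general} and applies Theorem~\ref{thm:master} with $p=1$, $D_1=\alpha^2$, $K_q=K_{q,k}$. You simply unpack that step through Propositions~\ref{prop:richardson-cancellation}, \ref{prop:primitive-to-balance} and \ref{prop:positive-variance-penalty}, and taking the derivative regularity directly from A5 is consistent with how the paper itself relies on A5 for the remainder.
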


\begin{proof}
Section~\ref{sec:richardson-general} gives $\Delta_{\mathrm{MSE}}(\epsilon,B)=\alpha^2\epsilon^2-K_{q,k}\epsilon^q/B+R(\epsilon,B)$, which is Theorem~\ref{thm:master} with $p=1$, $D_1=\alpha^2$, and $K_q=K_{q,k}$.
\end{proof}

\subsection{Higher-order leading bias}
\label{sec:leading-bias-p}
If the unmitigated leading bias is $\mathrm{Bias}_{\mathrm{noisy}}(\epsilon)=A_p\epsilon^p+O(\epsilon^{p+1})$ and a fixed linear extrapolation rule has $\rho_p=\sum_jc_j\lambda_j^p$, then $\mathrm{Bias}_{\mathrm{ZNE}}(\epsilon)=\rho_pA_p\epsilon^p+O(\epsilon^{p+1})$ gives $D_p=A_p^2(1-\rho_p^2)$. The boundary then follows from Theorem~\ref{thm:master} with this $D_p$, provided $D_p>0$ (equivalently, $|\rho_p|<1$). If $D_p\le0$, the fixed extrapolation rule does not improve the leading squared bias; the lower help-harm boundary may disappear, reverse, or move outside the perturbative window.

\section{Observable-class consequences and sharpness}
\label{sec:consequences}

The exponent $q$ is a property of the measured random variable, not of Richardson extrapolation itself. The following elementary conditions cover the two regimes used in the validation.

\begin{proposition}[Variance exponents for fixed protocols and physical observables]
\label{prop:variance-exponents}
Fix the circuit size and the measurement protocol, and let $Y_\epsilon^{(\mathcal P)}$ be the effective single-shot random variable. \textup{(i)} If $\mathrm{Var}(Y_0^{(\mathcal P)})=v_0>0$ and $v$ is differentiable at zero, then $v(\epsilon)=v_0+O(\epsilon)$, so the effective exponent is $q=0$. This covers Hamiltonian or multi-Pauli protocols whenever the ideal effective measurement variable has nonzero variance.

\textup{(ii)} If $X_\epsilon\in\{-1,+1\}$ is deterministic in the ideal limit and
\[
\mu(\epsilon)=\mathbb E[X_\epsilon]=\sigma(1-\kappa\epsilon^r+O(\epsilon^{r+\delta})),
\qquad
\sigma\in\{-1,+1\},
\qquad
\kappa>0,
\]
with $r>0$ and $\delta>0$, then
\[
v(\epsilon)=1-\mu(\epsilon)^2=2\kappa\epsilon^r+O(\epsilon^{r+\delta'}),
\]
where $\delta'=\min\{\delta,r\}$, so the effective exponent is $q=r$. In particular, first-order leakage gives $q=1$.

\textup{(iii)} For a fixed-size GHZ experiment measured with $X^{\otimes n}$, suppose a local product noise family acts in the Heisenberg picture on the measured Pauli string through a scalar attenuation
\[
\mathcal N_{\epsilon}^{\dagger}(X)=(1-\gamma\epsilon+O(\epsilon^2))X+R_\epsilon,
\]
where the remainder terms have zero contribution to the GHZ $X^{\otimes n}$ expectation at first order, and $\gamma>0$. Then
\[
\langle X^{\otimes n}\rangle_\epsilon
=(1-\gamma\epsilon+O(\epsilon^2))^n
=1-n\gamma\epsilon+O_n(\epsilon^2).
\]
Therefore $\kappa_n=n\gamma$, $v_n(\epsilon)=2n\gamma\epsilon+O_n(\epsilon^2)$, and the effective variance exponent is $q=1$ for each fixed $n$.

\textup{(iv)} For a fixed-size QAOA/MaxCut energy measurement with cost Hamiltonian $H_C$, if the ideal variational state is not an eigenstate of $H_C$ under the chosen measurement and normalization convention, then
\[
v_n(0)=\langle H_C^2\rangle_0-\langle H_C\rangle_0^2>0.
\]
If $v_n(\epsilon)$ is continuous at zero, the effective variance exponent is $q=0$ for that fixed instance.
\end{proposition}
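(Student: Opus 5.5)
The plan is to treat the four parts separately. Each reduces to an elementary expansion of $v(\epsilon)$ near $\epsilon=0$, followed by reading off the leading power as the effective exponent $q$ in A4. No earlier result beyond the definition of $v$ in Section~\ref{sec:setup} is needed. The only care required is to verify $\nu>0$ and a strictly positive remainder gap $\delta_v$ in each case, so that A4 genuinely holds.

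For (i), differentiability of $v$ at zero gives $v(\epsilon)=v_0+v'(0)\epsilon+o(\epsilon)$, and in particular $v(\epsilon)=v_0+O(\epsilon)$. Hence A4 holds with $q=0$, $\nu=v_0>0$ and $\delta_v=1$. For (iv), the hypothesis that the ideal state is not an eigenstate of $H_C$ gives $v_n(0)=\mathrm{Var}_0(H_C)>0$; I will justify this by noting that zero variance of a Hermitian observable in a state forces the state to lie in an eigenspace. Continuity then yields $v_n(\epsilon)=v_n(0)+o(1)$, which is the $q=0$ leading term. A4's remainder rate $\delta_v$ requires slightly more than continuity, for example differentiability as in (i). I will note this and otherwise defer to (i).

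For (ii), use that $X_\epsilon^2=1$, so $v(\epsilon)=1-\mu(\epsilon)^2$. Write $\mu=\sigma(1-u)$ with $u=\kappa\epsilon^r+O(\epsilon^{r+\delta})$. Since $\sigma^2=1$, we get $v=1-(1-u)^2=2u-u^2$. The term $2u$ contributes $2\kappa\epsilon^r+O(\epsilon^{r+\delta})$, and $u^2=O(\epsilon^{2r})=O(\epsilon^{r+r})$. Combining the two remainders gives an error of order $\epsilon^{r+\min\{\delta,r\}}$, so $q=r$ with $\nu=2\kappa>0$. The leakage case is $r=1$.

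For (iii), I work in the Heisenberg picture with product noise $\mathcal N_\epsilon=\bigotimes_i\mathcal N^{(i)}_\epsilon$. Expanding $\bigotimes_i\big((1-\gamma\epsilon+O(\epsilon^2))X_i+R^{(i)}_\epsilon\big)$ gives the main term $(1-\gamma\epsilon+O(\epsilon^2))^nX^{\otimes n}$ plus cross terms each containing at least one $R^{(i)}_\epsilon$. By hypothesis these cross terms contribute nothing to the GHZ expectation at first order, so their contribution is $O_n(\epsilon^2)$. Since $\langle X^{\otimes n}\rangle_0=1$ for the GHZ state, the binomial expansion yields $\mu=1-n\gamma\epsilon+O_n(\epsilon^2)$. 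Applying (ii) with $\sigma=1$, $r=1$, $\kappa=n\gamma$ and $\delta=1$ gives $v_n=2n\gamma\epsilon+O_n(\epsilon^2)$, so $q=1$.

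The main obstacle is (iii): the statement ``zero contribution at first order'' must be interpreted precisely. I would take it to mean that every term in the product expansion involving some $R^{(i)}_\epsilon$ has GHZ expectation $O(\epsilon^2)$, with constants depending on $n$, which is why the bound is written $O_n$. I will state this reading explicitly, along with the fact that the uniformity over the finite scale set required by A4 is automatic because $\lambda_j\epsilon$ ranges over finitely many rescalings.
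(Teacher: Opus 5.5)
Your proposal is correct and follows the paper's proof essentially step for step. Parts (i) and (iv) rest on a nonzero limit of $v$ at zero, part (ii) on $v=1-\mu^2$ with $\sigma^2=1$, and part (iii) on expanding the product-channel adjoint on $X^{\otimes n}$ and then invoking (ii) with $r=1$, $\kappa=n\gamma$. Your added remark is a fair refinement of the paper's one-line claim that ``(iv) is exactly (i)'': (iv) as stated assumes only continuity, so getting A4's remainder gap $\delta_v$ requires something like the differentiability assumed in (i).
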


\begin{proof}
Part (i) is continuity with a nonzero limit. For part (ii), binary measurement gives $v(\epsilon)=1-\mu(\epsilon)^2$. Substituting the displayed expansion and using $\sigma^2=1$ gives the stated leading term. Part (iii) follows by applying the product-channel adjoint to the Pauli string and expanding the fixed-$n$ product; the measured observable is binary, so part (ii) gives $v_n(\epsilon)=1-\langle X^{\otimes n}\rangle_\epsilon^2=2n\gamma\epsilon+O_n(\epsilon^2)$. Part (iv) is exactly part (i) applied to the effective single-shot energy estimator induced by the fixed Hamiltonian measurement protocol.

\end{proof}

\begin{proposition}[Concrete depolarizing-contraction model implies A5]
\label{prop:depol-contraction-a5}
Consider a fixed-size Pauli-string experiment with a binary observable $P$ whose ideal expectation is $\langle P\rangle_0=1$. Suppose that the effective noise-scaling procedure acts, for each fixed scale factor $\lambda_j$, as a local Pauli contraction in the measured Heisenberg component:
\[
\mu(\lambda_j\epsilon)
=\langle P\rangle_{\lambda_j\epsilon}
=(1-\gamma\lambda_j\epsilon)^{\ell_n}+O_n(\epsilon^2),
\qquad \gamma>0,
\]
where $\ell_n$ is the fixed number of Pauli-active noisy locations contributing to the measured string. For any fixed Richardson rule of order $k\ge1$ with fixed scale factors, independent samples, and fixed allocation, A5 holds with
\[
p=1,
\qquad q=1,
\qquad D_{1,n}=\kappa_n^2,
\qquad \nu_n=2\kappa_n,
\qquad \kappa_n=\gamma\ell_n,
\]
and
\[
K_{1,k,n}=2\kappa_n
\left[\sum_j\frac{c_j^2\lambda_j}{\pi_j}-1\right].
\]
Thus the local boundary is
\[
\epsilon^*_n(B)
\sim
\frac{2}{\kappa_n}
\left[\sum_j\frac{c_j^2\lambda_j}{\pi_j}-1\right]B^{-1}.
\]
\end{proposition}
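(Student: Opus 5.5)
The plan is to reduce the statement to Proposition~\ref{prop:primitive-to-balance} and then Corollary~\ref{cor:richardson-main}. We verify the primitive bias and variance expansions directly from the contraction model and read off the constants.

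\textbf{Step 1: expand the mean.} For each fixed scale factor, expand $(1-\gamma\lambda_j\epsilon)^{\ell_n}=1-\ell_n\gamma\lambda_j\epsilon+O_n(\epsilon^2)$. Here $\ell_n$ is fixed and the scale set is finite. This gives
\[
\mu(\lambda_j\epsilon)=1-\kappa_n\lambda_j\epsilon+O_n(\epsilon^2),\qquad \kappa_n=\gamma\ell_n .
\]
Consequently:
\begin{itemize}
\item[] the unmitigated bias is $-\kappa_n\epsilon+O(\epsilon^2)$, so $a_1=-\kappa_n$, i.e.\ $\alpha=-\kappa_n\neq0$;
\item[] A1$(1)$ holds, and Proposition~\ref{prop:richardson-cancellation} with $\sum_jc_j=1$ and $\sum_jc_j\lambda_j=0$ gives $\mathrm{Bias}_{\mathrm{ZNE}}=O(\epsilon^2)$, so $\widetilde a_1=0$.
\end{itemize}
For a rule of order $k>1$, only the first-order moment condition is needed here. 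The remainder is $O(\epsilon^2)$ regardless, because the model only specifies $\mu$ to second order. Proposition~\ref{prop:primitive-to-balance} with $p=1$ and $\delta_b=1$ then yields $D(\epsilon)=\kappa_n^2\epsilon^2+O(\epsilon^3)$, so $D_{1,n}=\kappa_n^2$.

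\textbf{Step 2: expand the variance.} The observable is binary with ideal value $\sigma=+1$. Apply Proposition~\ref{prop:variance-exponents}(ii) with $r=1$ and $\delta=1$:
\[
v(\lambda_j\epsilon)=1-\mu(\lambda_j\epsilon)^2=2\kappa_n\lambda_j\epsilon+O_n(\epsilon^2),
\]
uniformly over the finite scale set. Hence $q=1$, $\nu_n=2\kappa_n$ and $\delta_v=1$. Independence across scale factors (A3) and fixed allocation (A2) give the variance formula of Section~\ref{sec:richardson-general}. The second part of Proposition~\ref{prop:primitive-to-balance} then gives $A_k(\epsilon)=K_{1,k,n}\epsilon+O(\epsilon^2)$ with
\[
K_{1,k,n}=2\kappa_n\Bigl[\sum_j c_j^2\lambda_j/\pi_j-1\Bigr].
\]
This constant is positive by Proposition~\ref{prop:positive-variance-penalty}. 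This establishes A5 with the stated constants.

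\textbf{Step 3: read off the boundary.} Since $q=1<2=2p$, Corollary~\ref{cor:richardson-main} (equivalently Theorem~\ref{thm:master}(i)) gives
\[
\epsilon^*_n(B)\sim\frac{K_{1,k,n}}{\kappa_n^2}\,B^{-1}=\frac{2}{\kappa_n}\Bigl[\sum_j c_j^2\lambda_j/\pi_j-1\Bigr]B^{-1}.
\]

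\textbf{Main obstacle.} The delicate point is the ``differentiable power-law remainders'' requirement in A5, which the root-stability Lemma~\ref{lem:root-stability} needs for derivative convergence. The model as stated gives only $O_n(\epsilon^2)$ bounds on $\mu$, not on its derivative. I would first interpret the model's remainder as a $C^1$ (indeed polynomial) remainder with derivative $O_n(\epsilon)$. This is true when $\mu$ is exactly the polynomial $(1-\gamma\lambda\epsilon)^{\ell_n}$ plus a smooth correction. Under that interpretation, the rescaled remainders $\eta_B(x)$ and $\eta_B'(x)$ are $O(B^{-1})$ on compact $x$-sets, as in Remark~\ref{rem:explicit-convergence-rate}. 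If one only wants existence of a bracketed crossing, Proposition~\ref{prop:finite-budget-bracket} needs just the magnitude bounds and sidesteps this issue.
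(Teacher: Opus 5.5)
Your proposal is correct and follows essentially the same route as the paper. You expand the contraction to get $\mu(\lambda_j\epsilon)=1-\kappa_n\lambda_j\epsilon+O_n(\epsilon^2)$, use first-order Richardson cancellation to get an $O_n(\epsilon^2)$ mitigated bias, and use the binary identity $v=1-\mu^2$ to obtain $q=1$ and $\nu_n=2\kappa_n$. You then invoke Proposition~\ref{prop:primitive-to-balance} and Corollary~\ref{cor:richardson-main}, exactly as the paper does.

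Your two additional observations are sound refinements of points the paper leaves implicit. First, for $k>1$ only the conditions $\sum_j c_j=1$ and $\sum_j c_j\lambda_j=0$ are needed, because the model specifies $\mu$ only up to $O_n(\epsilon^2)$. Second, the derivative regularity that A5 requires must be read into the model, or can be bypassed for bracketed existence via Proposition~\ref{prop:finite-budget-bracket}.
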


\begin{proof}
Expanding the fixed-size contraction gives
\[
\mu(\epsilon)=1-\kappa_n\epsilon+O_n(\epsilon^2),
\qquad \kappa_n=\gamma\ell_n.
\]
The unmitigated bias is $-\kappa_n\epsilon+O_n(\epsilon^2)$, while first-order Richardson cancellation gives a mitigated bias of order $O_n(\epsilon^2)$. Hence
\[
D(\epsilon)=\kappa_n^2\epsilon^2+O_n(\epsilon^3).
\]
Because the measured observable is binary,
\[
v_n(\epsilon)=1-\mu(\epsilon)^2=2\kappa_n\epsilon+O_n(\epsilon^2),
\]
so the effective variance exponent is $q=1$ and $\nu_n=2\kappa_n$. Substitution into the fixed-Richardson variance penalty gives
\[
A_k(\epsilon)=K_{1,k,n}\epsilon+O_n(\epsilon^2),
\]
with the displayed $K_{1,k,n}$. These are exactly the primitive conditions in Proposition~\ref{prop:primitive-to-balance}, and therefore A5 holds. The boundary formula follows from Corollary~\ref{cor:richardson-main}.
\end{proof}

Proposition~\ref{prop:depol-contraction-a5} is the concrete fixed-size bridge from a Pauli-string noise model to the abstract A5 balance. Together with Proposition~\ref{prop:variance-exponents}, it gives the two central observable-class predictions: $\epsilon^*(B)\sim C_{0,k}B^{-1/2}$ for nonzero-variance protocols, including QAOA/MaxCut energy measurements and variational-algorithm settings \cite{Farhi2014,Farhi2022,Cerezo2021,Preskill2018,Bharti2022,Peruzzo2014,McClean2016,McClean2018,Kandala2017}, and $\epsilon^*(B)\sim C_{1,k}B^{-1}$ for deterministic binary protocols with first-order leakage, including GHZ stabilizer measurements \cite{Greenberger1989,Mermin1990}. For Hamiltonians measured through multiple Pauli terms, $q$ must be estimated for the effective protocol-induced variance described in Section~\ref{sec:setup}; coherent-error or symmetry-protected cases can produce different exponents. All constants are pointwise in circuit size; no uniform-in-$n$ claim is made.

\begin{proposition}[Sharpness examples for the $q=0$ and $q=1$ laws]
\label{prop:sharpness}
\textup{(i)} There is an exactly computable binary-observable model with nonzero ideal variance for which
\[
\epsilon^*(B)=\left(\frac{K_0}{\alpha^2}\right)^{1/2}B^{-1/2}+O(B^{-1}).
\]
\textup{(ii)} There is an exactly computable deterministic-limit binary-observable model for which
\[
\epsilon^*(B)=\frac{K_1}{\kappa^2}B^{-1}+O(B^{-2}).
\]
Thus both the $B^{-1/2}$ and $B^{-1}$ exponents, including their leading constants, are attained.
\end{proposition}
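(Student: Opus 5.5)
Both parts go by explicit construction. We take first-order Richardson with scale factors $(1,a)$, coefficients $c_0=a/(a-1)$ and $c_1=-1/(a-1)$, uniform or any fixed allocation $\pi_j$, and independent sampling. For each part we choose a binary observable whose mean is exactly affine in $\epsilon$, so the Richardson estimator is exactly unbiased. Then $\mathrm{Bias}_{\mathrm{ZNE}}\equiv0$ and $D(\epsilon)=\alpha^2\epsilon^2$ exactly, with no bias remainder. The MSE difference then reduces to a closed-form algebraic function of $\epsilon$ and $B$, and the crossing can be solved for directly.

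\textbf{Part (i).} Take $Y_\epsilon\in\{-1,+1\}$ with $\mu(\epsilon)=\mu_0-\alpha\epsilon$, where $|\mu_0|<1$ and $\alpha\neq0$, for $\epsilon$ in a small interval. Then $v(\epsilon)=1-(\mu_0-\alpha\epsilon)^2$ is an explicit quadratic, and
\[
B\,\Delta_{\mathrm{MSE}}=B\alpha^2\epsilon^2-\Big[\sum_j\tfrac{c_j^2}{\pi_j}v(\lambda_j\epsilon)-v(\epsilon)\Big].
\]
The bracket is a polynomial $K_0+K'\epsilon+K''\epsilon^2$ with $K_0=v(0)[\sum_jc_j^2/\pi_j-1]>0$ by Proposition~\ref{prop:positive-variance-penalty}. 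The crossing solves the quadratic
\[
(B\alpha^2-K'')\epsilon^2-K'\epsilon-K_0=0,
\]
which has a unique positive root for large $B$. Writing it out and expanding in $B^{-1/2}$ gives $\epsilon^*=\sqrt{K_0/\alpha^2}\,B^{-1/2}+K'/(2\alpha^2B)+O(B^{-3/2})$, which is the claimed form. Sign-change and uniqueness follow from the quadratic's sign pattern. We must also check that $\epsilon^*\to0$, so that it lies in the admissible range where $|\mu|\le1$.

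\textbf{Part (ii).} Take $\mu(\epsilon)=1-\kappa\epsilon$ exactly, which is binary and deterministic at zero. Then $v(\epsilon)=2\kappa\epsilon-\kappa^2\epsilon^2$, and the penalty bracket is $K_1\epsilon-K_2\epsilon^2$. Here $K_1=2\kappa[\sum_jc_j^2\lambda_j/\pi_j-1]>0$ and $K_2=\kappa^2[\sum_jc_j^2\lambda_j^2/\pi_j-1]$. Factoring out $\epsilon$ leaves the linear equation $(B\kappa^2+K_2)\epsilon=K_1$, so
\[
\epsilon^*=\frac{K_1}{B\kappa^2+K_2}=\frac{K_1}{\kappa^2}B^{-1}+O(B^{-2}).
\]
This is the first nontrivial positive root (excluding the trivial zero at $\epsilon=0$). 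It is sign-changing because the linear factor is monotone. These constants agree with Proposition~\ref{prop:depol-contraction-a5} with $\kappa_n=\kappa$.

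The only real obstacle is bookkeeping. We need the models to be legitimate over the whole neighbourhood, i.e.\ all $|\mu(\lambda_j\epsilon)|\le1$; this holds for small $\epsilon$ with $\kappa>0$, $\epsilon\ge0$, and $|\mu_0|<1$. We also need the $O(B^{-1})$ and $O(B^{-2})$ corrections to come out with uniform constants, which the explicit quadratic and linear formulas deliver directly.
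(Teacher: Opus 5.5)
Your proposal is correct and follows the paper's proof. Both use exactly affine binary models, so first-order Richardson with scales $(1,a)$ is exactly unbiased, and the excess-variance polynomial then reduces the crossing to an explicit quadratic in part (i) and a linear factor in part (ii), with root $K_1/(B\kappa^2+K_2)$ equal to the paper's $K_1/(B\kappa^2-L)$. Your part (i) is slightly sharper than the paper's, since you solve the quadratic exactly and exhibit the $K'/(2\alpha^2B)$ correction instead of appealing to a loose remainder $O(\epsilon^3+\epsilon/B)$.
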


\begin{proof}
For (i), let $X_\epsilon\in\{-1,+1\}$ with $\mathbb E[X_\epsilon]=\mu_0+\alpha\epsilon$, $\mu_0\in(-1,1)$, and $\alpha\ne0$. Then $v(\epsilon)=1-(\mu_0+\alpha\epsilon)^2=(1-\mu_0^2)+O(\epsilon)$, so $q=0$. First-order Richardson cancels the exactly linear bias, giving $\Delta_{\mathrm{MSE}}=\alpha^2\epsilon^2-K_0/B+O(\epsilon^3+\epsilon/B)$ and the stated root. For (ii), let $\mathbb E[X_\epsilon]=1-\kappa\epsilon$. Then $v(\epsilon)=2\kappa\epsilon-\kappa^2\epsilon^2$, so $q=1$. For first-order Richardson with scales $(1,a)$, the expectation is exactly linear and the excess variance is $A(\epsilon)=K_1\epsilon+L\epsilon^2$; hence $\Delta_{\mathrm{MSE}}=\epsilon[(\kappa^2-L/B)\epsilon-K_1/B]$ and the nonzero crossing is $K_1/(B\kappa^2-L)=K_1\kappa^{-2}B^{-1}+O(B^{-2})$.
\end{proof}

The critical and supercritical regimes are similarly sharp under monomial models, confirming that the three cases in Table~\ref{tab:regimes} are not merely formal.

\section{Optimal shot allocation}
\label{sec:optimal-allocation}

\begin{remark}[Optimal shot allocation]
\label{rem:optimal-allocation}
For fixed scale factors and Richardson coefficients, the Lagrange-optimal allocation
\[
n_j^{\mathrm{opt}}
=B\frac{|c_j|\sqrt{v(\lambda_j\epsilon)}}{\sum_\ell |c_\ell|\sqrt{v(\lambda_\ell\epsilon)}}.
\]
If $v(\epsilon)\sim\nu\epsilon^q$, then
\[
\mathrm{Var}_{\mathrm{ZNE}}^{\mathrm{opt}}
\sim
\frac{\nu\epsilon^q}{B}
\left(\sum_j |c_j|\lambda_j^{q/2}\right)^2,
\]
and the optimal-allocation penalty is $K_q^{\mathrm{opt}}\epsilon^q$ with
\[
K_q^{\mathrm{opt}}
=\nu\left[\left(\sum_j |c_j|\lambda_j^{q/2}\right)^2-1\right]>0
\]
for every nontrivial Richardson rule, since $\lambda_j^{q/2}\ge1$ and $\sum_j|c_j|>1$. Thus optimal allocation changes the boundary constant $C_{p,q}$ but not the exponent $-1/(2p-q)$.
\end{remark}

\begin{remark}[Rounding and correlated sampling]
If $n_j=\pi_jB+O(1)$, then $1/n_j=(\pi_jB)^{-1}+O(B^{-2})$ and integer rounding contributes only $O(\epsilon^q/B^2)$, below the leading $\epsilon^q/B$ variance term. Correlated sampling can be handled by replacing the independent variance penalty with $A(\epsilon)=c^\top\Gamma(\epsilon)c-v(\epsilon)$ whenever $\mathrm{Cov}(\widehat\mu_i,\widehat\mu_j)=B^{-1}\Gamma_{ij}(\epsilon)+o(B^{-1})$. If this effective penalty has $A(\epsilon)=K_q\epsilon^q+O(\epsilon^{q+\delta})$ with $K_q>0$, Theorem~\ref{thm:master} applies unchanged.
\end{remark}

\section{Empirical diagnostic protocol}
\label{sec:diagnostic}

The theorem gives a direct empirical diagnostic. In the fixed-Richardson case with $p=1$, the asymptotic log-log slope is
\begin{equation}
s=\lim_{B\to\infty}\frac{d\log\epsilon^*(B)}{d\log B}=-\frac1{2-q}.
\label{eq:predicted-slope}
\end{equation}
Equation~\eqref{eq:predicted-slope} motivates the strongest slope test: estimate $q$ independently from the variance curve and compare $\widehat s_{\mathrm{pred}}=-1/(2-\widehat q)$ with the observed boundary slope $\widehat s_{\mathrm{obs}}$.

\textbf{Local bias coefficient and variance exponent.} On a pre-specified small-noise grid, estimate
\begin{align*}
\mu(\epsilon_i)-\mu_0&=\alpha\epsilon_i+\beta\epsilon_i^2+u_i,\\
\log v(\epsilon_i)&=\log\nu+q\log\epsilon_i+r_i.
\end{align*}
For higher-order cases, replace the bias regression by the lowest nonzero power supported by the negative-control design. The variance-regression window must be fixed before observing the boundary slope; when $v(0)>0$ is available from statevector simulation, it should be estimated directly.

\textbf{Variance penalty and constant-level check.} For fixed Richardson ZNE, compute the plug-in estimates
\begin{align*}
\widehat K_{q,k}&=\widehat\nu\left[\sum_{j=0}^{k}\frac{c_j^2\lambda_j^{\widehat q}}{\pi_j}-1\right],\\
\widehat C_{q,k}&=\left(\frac{\widehat K_{q,k}}{\widehat\alpha^2}\right)^{1/(2-\widehat q)}.
\end{align*}
The fitted intercept from $\log\epsilon^*(B)=\log C+s\log B+e_B$ should be compared with $\widehat C_{q,k}$, not only with the predicted slope.

\textbf{Boundary estimation and uncertainty.} For each shot budget $B$, estimate the lower perturbative crossing by interpolation around the first sign change in
\[
\epsilon^*(B)=\inf\{\epsilon>0:\Delta_{\mathrm{MSE}}(\epsilon,B)>0\}
\]
inside a pre-specified small-noise window. Budgets with no first sign change should be reported as censored. Bootstrap resampling should be performed at the raw-count level, preserving circuit, noise-scale, seed, observable, and shot-budget labels; each replicate recomputes noisy estimates, ZNE estimates, $\Delta_{\mathrm{MSE}}$, crossings, $\widehat q$, $\widehat s_{\mathrm{obs}}$, and $\widehat C^{\mathrm{fit}}$.

\section{Empirical validation with Qiskit Aer and IBM hardware}
\label{sec:empirical}

The empirical component tests whether the help-harm boundary exponent separates by observable variance class under a reproducible Qiskit Aer pipeline using Qiskit \cite{JavadiAbhari2024}, Mitiq-compatible folding conventions \cite{LaRose2022}, and a raw-count bootstrap workflow \cite{Efron1979,DavisonHinkley1997}. Aer provides the central exponent evidence because $\epsilon$, budgets, interpolation, and resampling are controlled; IBM Quantum runs provide traceable hardware consistency checks, not asymptotic exponent fits.

Unless stated otherwise, Aer slopes use first-order Richardson ZNE with scale factors $[1,3]$, gate folding, and uniform shot allocation; bootstrap uncertainty outputs are retained from saved raw counts. Table~\ref{tab:assumption-checks} separates theorem-validating regimes from robustness and limitation regimes.

\begin{table}[t]
\centering
\footnotesize
\setlength{\tabcolsep}{3pt}
\renewcommand{\arraystretch}{1.08}
\caption{Assumption checks for empirical tests. The first two rows are theorem-validating regimes; the remaining rows are robustness or limitation regimes.}
\label{tab:assumption-checks}
\begin{tabularx}{\textwidth}{@{}L{0.16\textwidth}L{0.23\textwidth}L{0.22\textwidth}L{0.13\textwidth}X@{}}
\toprule
Case & Expected class & Small-noise check & Crossing & Use \\
\midrule
GHZ, depol. & deterministic binary & $\widehat q\approx0.99$ locally & harm-to-help & central; near-$B^{-1}$ \\
QAOA, depol. & nonzero ideal variance & $\widehat q\approx0$ locally & harm-to-help & central; $B^{-1/2}$ \\
GHZ, amp. damping & deterministic binary & three-seed exact check near $B^{-1}$ & harm-to-help & robustness \\
QAOA, amp. damping & protocol-dependent & shifted finite-window slope & harm-to-help & robustness/limit \\
Readout/SPAM & changed measurement protocol & known-$p$ calibrated observable shifts $\widehat q$ & harm-to-help after calibration & regime change/control \\
\bottomrule
\end{tabularx}
\end{table}

\subsection{Main Aer evidence}

Figure~\ref{fig:key-exponents} and Table~\ref{tab:main-aer} summarize the principal fitted exponents. The depolarizing runs are the central Aer validation: QAOA stays close to $-1/2$, GHZ stays close to $-1$, and amplitude damping preserves the GHZ/QAOA separation with a more negative but still distinct QAOA slope.

\begin{figure}[t]
\centering
\includegraphics[width=0.88\textwidth]{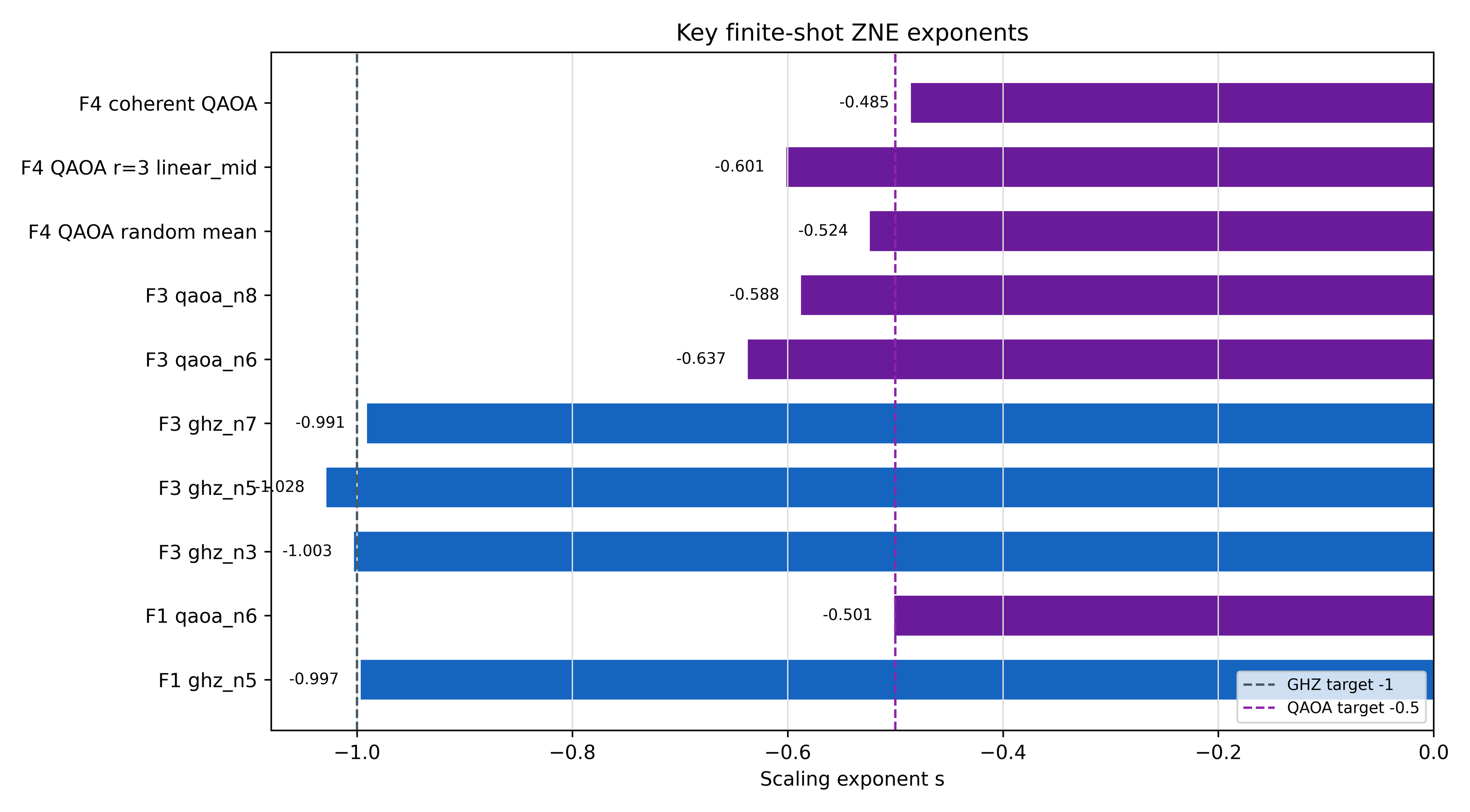}
\caption{Key fitted help-harm boundary exponents from the reproducible Qiskit Aer pipeline. GHZ cases concentrate near the predicted $B^{-1}$ regime, while QAOA cases concentrate near the predicted $B^{-1/2}$ regime, with amplitude damping producing a more negative QAOA but still separated from GHZ.}
\label{fig:key-exponents}
\end{figure}

\begin{table}[t]
\centering
\small
\setlength{\tabcolsep}{4pt}
\caption{Representative Aer slopes. Values are log-log slopes of the lower help-harm boundary. Phase 2 depolarizing entries are means over three uniform-allocation seeds, except for the GHZ $n=7$ follow-up, which uses the extended high-shot follow-up. Amplitude-damping entries are the exact-moment three-folding-seed check.}
\label{tab:main-aer}
\begin{tabular*}{\textwidth}{@{\extracolsep{\fill}}L{0.08\textwidth}L{0.24\textwidth}L{0.16\textwidth}L{0.15\textwidth}r}
\toprule
Phase & Noise model & Case & Role & Slope $s$ \\
\midrule
1 & depolarizing pilot & GHZ $n=5$ & pilot & $-0.9966$ \\
1 & depolarizing pilot & QAOA $n=6$ & pilot & $-0.5010$ \\
2 & depolarizing & GHZ $n=3$ & central & $-0.9705$ \\
2 & depolarizing & GHZ $n=5$ & central & $-0.9801$ \\
2 & depolarizing & GHZ $n=7$ & follow-up & $-0.9863$ \\
2 & depolarizing & QAOA $n=6$ & central & $-0.4961$ \\
2 & depolarizing & QAOA $n=8$ & central & $-0.4955$ \\
8 & amplitude damping & GHZ $n=3$ & 3-seed exact & $-0.9940$ \\
8 & amplitude damping & GHZ $n=5$ & 3-seed exact & $-1.0028$ \\
8 & amplitude damping & GHZ $n=7$ & 3-seed exact & $-0.9990$ \\
8 & amplitude damping & QAOA $n=6$ & 3-seed exact & $-0.6242$ \\
8 & amplitude damping & QAOA $n=8$ & 3-seed exact & $-0.5689$ \\
\bottomrule
\end{tabular*}
\end{table}

Table~\ref{tab:q-to-s} reports the most direct theory-to-evidence check: $\widehat q$ is estimated from the variance curve independently of the boundary fit, then converted to $\widehat s_{\mathrm{pred}}=-1/(2-\widehat q)$ and compared with the observed slope. The local variance fits align GHZ with the $B^{-1}$ regime and QAOA with the $B^{-1/2}$ regime. Synthetic tests separately recovered the $p=2$ slopes $-0.2499$ and $-0.3333$ for $q=0$ and $q=1$, and the critical $q=2p$ threshold $B^*=20000$. Optimal-allocation extraction from the same Phase 2 outputs gave GHZ $n=5$ slope $s=-0.9891$ and QAOA $n=6$ slope $s=-0.4974$, supporting the prediction that allocation changes constants but not exponents. Table~\ref{tab:constants} gives the companion constant-level check.

\begin{table}[t]
\centering
\footnotesize
\setlength{\tabcolsep}{3pt}
\renewcommand{\arraystretch}{1.08}
\caption{Independent variance-exponent prediction of the observed boundary slope.}
\label{tab:q-to-s}
\begin{tabularx}{\textwidth}{@{}L{0.20\textwidth}L{0.16\textwidth}r r r X@{}}
\toprule
Case & Evidence role & $\widehat q$ & $s_{\mathrm{pred}}$ & $s_{\mathrm{obs}}$ & Interpretation \\
\midrule
GHZ $n=3$ & central & 0.989980 & $-0.990080$ & $-0.970479$ & match \\
GHZ $n=5$ & central & 0.990831 & $-0.990914$ & $-0.980117$ & match \\
GHZ $n=7$ & central & 0.990819 & $-0.990902$ & $-0.941823$ & acceptable; high-shot follow-up below \\
GHZ $n=7$ & follow-up & 0.989666 & $-0.989771$ & $-0.986290$ & match \\
QAOA $n=6$ & central & $-0.004810$ & $-0.498800$ & $-0.496102$ & match \\
QAOA $n=8$ & central & $-0.000706$ & $-0.499824$ & $-0.495458$ & match \\
\bottomrule
\end{tabularx}
\end{table}

\begin{table}[t]
\centering
\footnotesize
\setlength{\tabcolsep}{4pt}
\renewcommand{\arraystretch}{1.08}
\caption{Constant-level check for the finite-budget boundary.}
\label{tab:constants}
\begin{tabularx}{\textwidth}{@{}L{0.18\textwidth}r r r X@{}}
\toprule
Case & $C_{\mathrm{theory}}$ & $C_{\mathrm{fit}}$ & Rel. error & Interpretation \\
\midrule
GHZ $n=5$ & 0.654424 & 0.586093 & 0.104414 & good constant-level agreement \\
QAOA $n=6$ & 0.222978 & 0.212534 & 0.046840 & good constant-level agreement \\
GHZ $n=7$ follow-up & 0.442496 & 0.434304 & 0.018512 & strongest GHZ constant check \\
\bottomrule
\end{tabularx}
\end{table}

\subsection{Robustness, negative controls, and limitations}

Robustness checks varied the amplification factor $a=2,3,5$, used a quadratic Richardson rule with scale factors $[1,3,5]$, tested additional GHZ observables, and evaluated 20 deterministic QAOA random graphs at $n=8$. GHZ remained near $-1$ and QAOA near $-1/2$ in the main robustness cases; $Z_0$ acted as a negative-control GHZ observable with no stable finite crossing. Figure~\ref{fig:qaoa-random} reports the random-graph distribution, with mean slope $-0.5236$ and standard deviation $0.0143$.

\begin{figure}[t]
\centering
\includegraphics[width=0.82\textwidth]{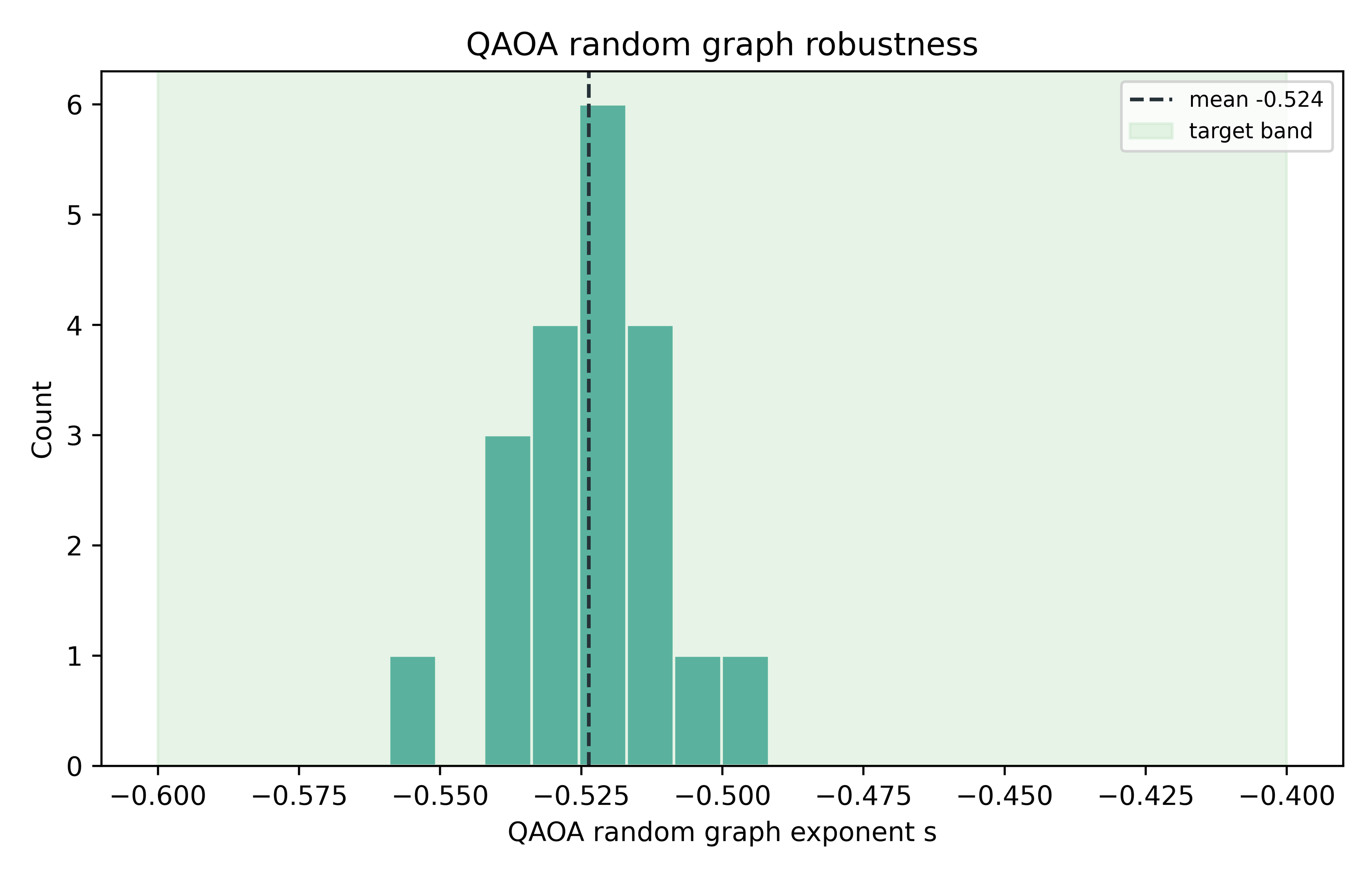}
\caption{Distribution of QAOA random-graph slopes for 20 deterministic $n=8$ instances. The ensemble mean is $-0.5236$ with standard deviation $0.0143$, consistent with the predicted nonzero-variance $B^{-1/2}$ regime.}
\label{fig:qaoa-random}
\end{figure}

Depth robustness is schedule-sensitive rather than a clean universal validation. The fixed schedule recovered the expected regime for $r=1$ and $r=2$, but not for $r=3$; a targeted $r=3$ parameter scan with the \texttt{linear\_mid} schedule gave $s=-0.6012$ and $R^2=0.9990$ (Figure~\ref{fig:qaoa-depth}). Readout/SPAM is reported as a measurement-protocol regime change rather than as a hidden failure: with a known-$p$ calibrated observable, GHZ moves from $q\approx1$ at $p=0$ toward $q\approx0$ for fixed readout probabilities, and the observed slopes move toward the $B^{-1/2}$ class; without correction, readout leaves an $\epsilon=0$ bias and is reported as an assumption-breaking control (Figure~\ref{fig:readout}).

\begin{figure}[t]
\centering
\includegraphics[width=0.82\textwidth]{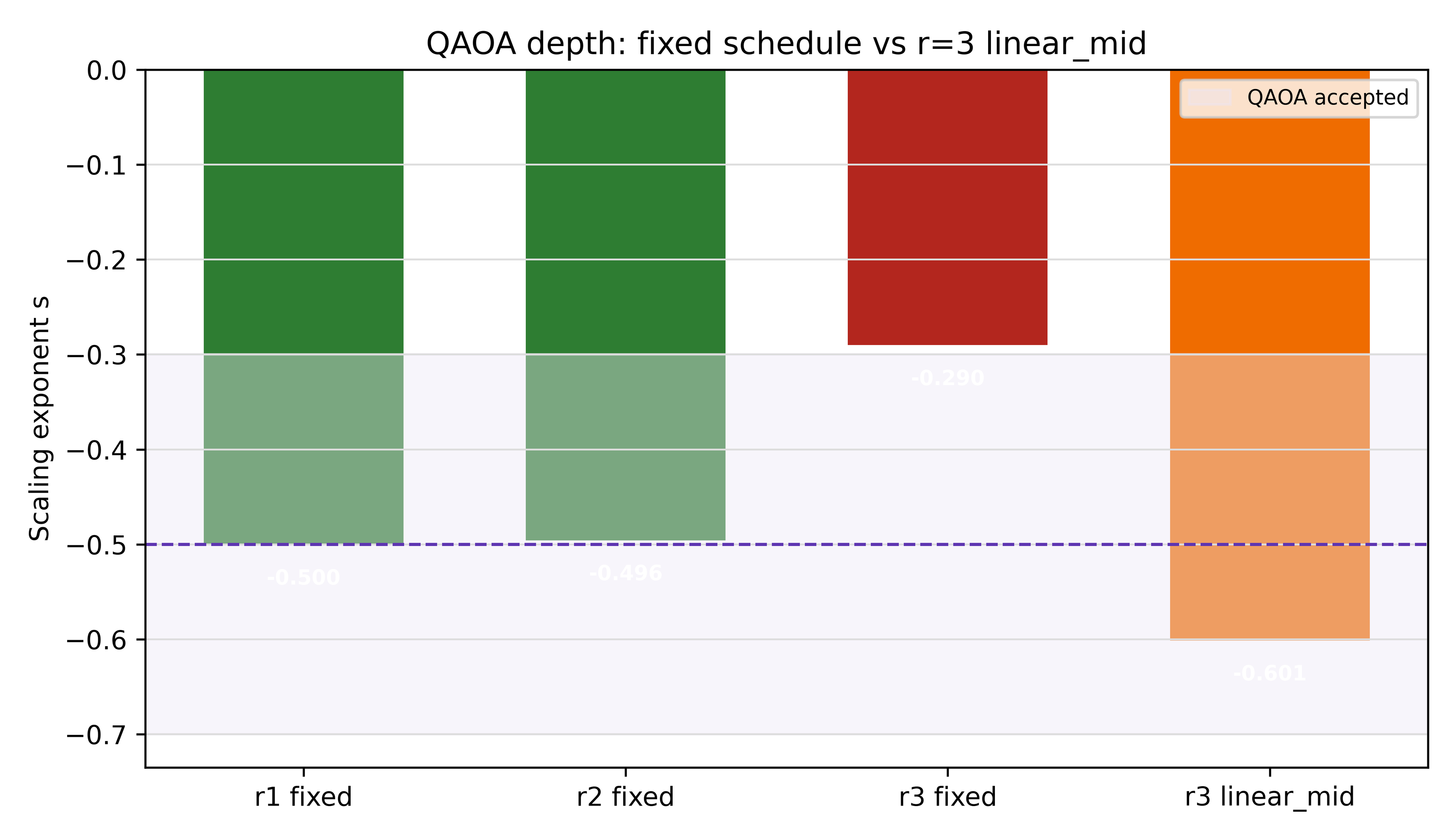}
\caption{QAOA depth study. The original fixed schedule does not recover the target regime at $r=3$, while a full-budget $r=3$ confirmation with the \texttt{linear\_mid} schedule recovers it. This is reported as schedule-sensitive robustness, not as a clean validation of the original fixed schedule.}
\label{fig:qaoa-depth}
\end{figure}

\begin{figure}[t]
\centering
\includegraphics[width=0.82\textwidth]{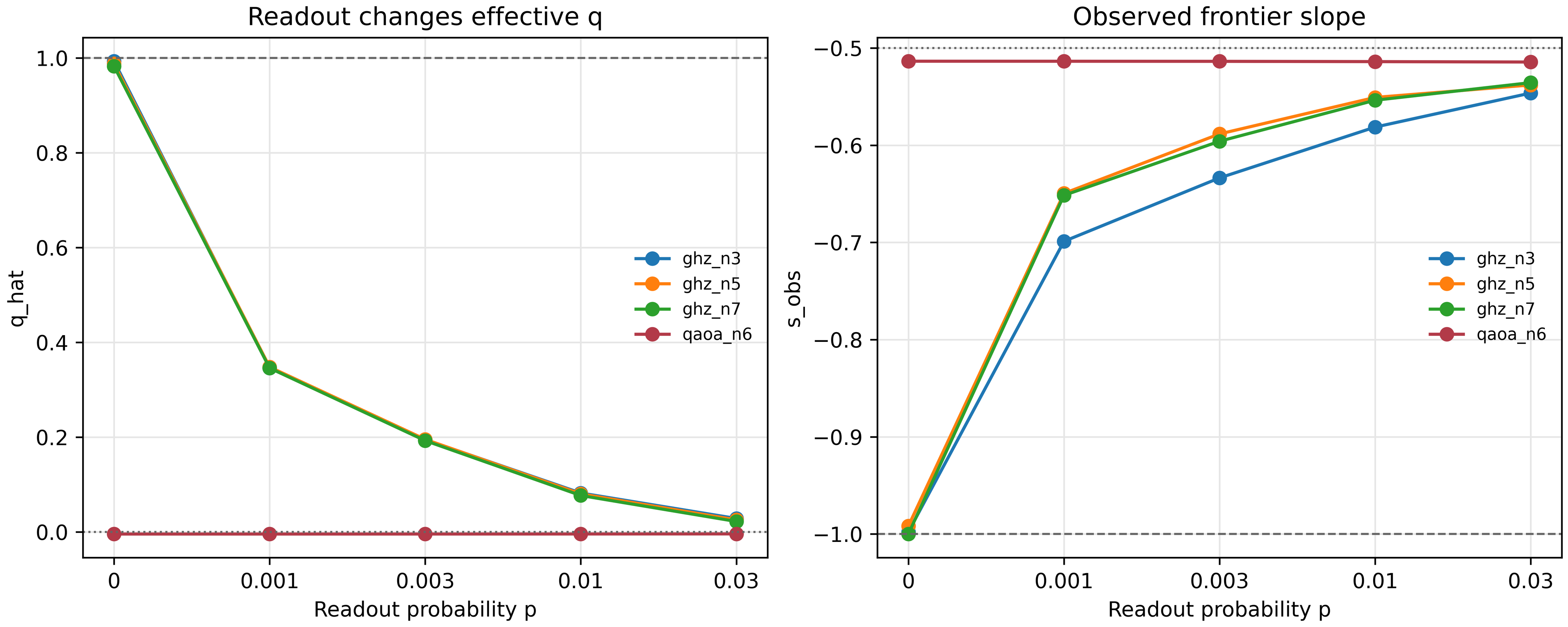}
\caption{Readout/SPAM as an effective measurement-regime change. With known-$p$ calibration, GHZ moves from a deterministic $q\approx1$ regime at $p=0$ toward a variance-floor $q\approx0$ regime as readout probability increases, while QAOA remains in the $q\approx0$ class.}
\label{fig:readout}
\end{figure}

For example, at $p_{\mathrm{readout}}=0.01$ with known-$p$ correction, GHZ $n=5$ gives $\widehat q=0.080243$, $s_{\mathrm{pred}}=-0.520899$, and $s_{\mathrm{obs}}=-0.550784$, while QAOA $n=6$ gives $\widehat q=-0.004447$, $s_{\mathrm{pred}}=-0.498891$, and $s_{\mathrm{obs}}=-0.513947$. The same run leaves uncorrected readout as \texttt{WARNING}, not as supporting evidence. Coherent overrotation is reported as auxiliary QAOA evidence ($s=-0.4855$, $R^2=0.9987$), while GHZ is reported qualitatively as a degenerate/cancellation case. A direct Monte Carlo spot-check compared the high-calibration-plus-bootstrap procedure against independent Monte Carlo repetitions; all 12 confidence intervals overlapped.

\subsection{IBM hardware consistency checks}
The IBM Quantum runs are deliberately small; their purpose is to verify that the GHZ/QAOA circuits, folding procedure, expectation-value extraction, ZNE combination, raw-count persistence, and job-metadata pipeline run on real hardware. In the reported GHZ checks, first-order Richardson moved the estimate toward the ideal GHZ stabilizer value; in QAOA MaxCut checks, ZNE reduced error relative to the unmitigated $\lambda=1$ estimate. Tables~\ref{tab:ibm-ghz} and~\ref{tab:ibm-qaoa} report these IBM hardware consistency checks.

\begin{table}[t]
\centering
\footnotesize
\setlength{\tabcolsep}{4pt}
\renewcommand{\arraystretch}{1.08}
\caption{IBM GHZ hardware consistency checks. These runs provide traceability of the hardware pipeline.}
\label{tab:ibm-ghz}
\begin{tabular*}{\textwidth}{@{\extracolsep{\fill}}ll l r r r l@{}}
\toprule
Run & Case & Backend & $\lambda=1$ & $\lambda=3$ & ZNE & Interpretation \\
\midrule
G1 & GHZ $n=3$ & \texttt{ibm\_kingston} & 0.9230 & 0.9165 & 0.92625 & improved \\
G2 & GHZ $n=5$ & \texttt{ibm\_kingston} & 0.8640 & 0.8230 & 0.88450 & improved \\
G3 & GHZ $n=5$ & \texttt{ibm\_marrakesh} & 0.8430 & 0.7640 & 0.88250 & improved \\
\bottomrule
\end{tabular*}
\vspace{2pt}
\begin{minipage}{\textwidth}
\scriptsize\textit{Job IDs:} G1=\jobid{d7pcm5625ies7395ilm0}; G2=\jobid{d7pcq8u25ies7395iqhg}; G3=\jobid{d7pcsre7g7gs73ceorq0}.
\end{minipage}
\end{table}

Two additional GHZ checks on \texttt{ibm\_fez} gave extrapolated values $0.96225$ for GHZ $n=3$ (job \texttt{d7r109vljm6s73b9qlu0}) and $1.01575$ for GHZ $n=5$ (job \texttt{d7r10equdops7395666g}). The latter slightly exceeds the physical Pauli range and is reported without clipping because linear Richardson extrapolation is not range-preserving.

\begin{table}[t]
\centering
\footnotesize
\setlength{\tabcolsep}{4pt}
\renewcommand{\arraystretch}{1.08}
\caption{IBM QAOA hardware consistency checks. Single-instance entries are absolute errors relative to the ideal simulator value; held-out rows report averages over pre-specified graph seeds.}
\label{tab:ibm-qaoa}
\begin{tabularx}{\textwidth}{@{}lL{0.18\textwidth}l r L{0.13\textwidth}r X@{}}
\toprule
Run & Case & Backend & $\lambda=1$ err. & $\lambda=3$ err./status & ZNE err. & Interpretation \\
\midrule
Q1 & Ring $n=5$, $r=1$ & \texttt{ibm\_fez} & 0.0726 & 0.1538 & 0.0320 & improved \\
Q2 & $n=5$, $r=1$ & \texttt{ibm\_marrakesh} & 0.0282 & 0.0336 & 0.0255 & improved \\
Q3 & Held-out 5 graphs & \texttt{ibm\_marrakesh} & 0.05357 & degraded 5/5 & 0.03531 & improved 5/5 \\
Q4 & Held-out 10 graphs & \texttt{ibm\_marrakesh} & 0.06445 & degraded 10/10 & 0.02810 & improved 9/10 \\
\bottomrule
\end{tabularx}
\vspace{2pt}
\begin{minipage}{\textwidth}
\scriptsize\textit{Job IDs:} Q1=\jobid{d7r1817ljm6s73b9qv6g}; Q2=\jobid{d7r1hisf3ras73b6dqbg}; Q3=\jobid{d7r1requdops7395749g}; Q4=\jobid{d7t76dkt738s73chg7i0}.
\end{minipage}
\end{table}

The strongest hardware consistency evidence is the held-out QAOA batch testing. Across seeds $[101,202,303,404,505]$ with 4000 shots per scale factor, ZNE improved in $5/5$ graphs and mean absolute error fell from $0.05357$ to $0.03531$. A final pre-specified replication over seeds $[606,707,808,909,1001,1102,1203,1304,1405,1506]$ improved $9/10$ graphs, showed degradation at the $\lambda=3$ scale in $10/10$ graphs, and reduced mean absolute error from $0.06445$ to $0.02810$. These IBM runs support hardware traceability of the pipeline; they are not used to estimate the asymptotic boundary exponent.

\section{Discussion: scope, failure modes, and relation to prior work}
\label{sec:discussion}

As stated in the scope box, the result is local, perturbative, and finite-shot. The lower boundary concerns the first nontrivial small-noise crossing of $\Delta_{\mathrm{MSE}}$ from negative to positive; at larger noise, higher-order bias, imperfect scaling, coherent accumulation, non-Markovian effects, or nonlinear extrapolation error can create additional crossings. The failure modes correspond to violations of the theorem's hypotheses, summarized in Table~\ref{tab:assumptions}.

Prior work established ZNE/Richardson mitigation \cite{Temme2017,LiBenjamin2017,Endo2018,Cai2023,Endo2021}, practical noise scaling \cite{GiurgicaTiron2020,He2020,Pascuzzi2022,RussoMari2024}, and data-driven or symmetry-based QEM variants such as symmetry expansion, learning-based mitigation, and Clifford-data mitigation \cite{Cai2021,Strikis2021,Czarnik2021}. Related finite-shot and benchmarking analyses include \cite{Krebsbach2022,Bultrini2023,Russo2023}, while measurement-error and readout-mitigation methods are represented by \cite{Bravyi2021,Nation2021}. Hardware demonstrations of error mitigation on noisy quantum processors include \cite{Kandala2019,Kim2023}. Sampling-cost and scalability barriers for QEM were developed in \cite{Takagi2022,Takagi2023,Tsubouchi2023,Qin2023,Quek2024}. Most closely, Mohammadipour and Li analyze ZNE with bias/variance bounds, sample-complexity estimates, and least-squares extrapolation to control overfitting and measurement noise \cite{MohammadipourLi2025}. The present work asks a complementary operational question: for a fixed finite-shot budget and fixed estimator pair, where is the perturbative MSE crossing at which ZNE changes from harmful to helpful, and which observable-dependent variance exponent controls it? Remark~\ref{rem:explicit-convergence-rate} records the explicit rate under power-law remainders, Proposition~\ref{prop:finite-budget-bracket} gives finite-budget bracketing, and Proposition~\ref{prop:local-optimality} shows that the harm region cannot be pushed below the theorem's boundary scale under the same leading balance. This is a finite-shot utility regime law, not an information-theoretic lower bound over all mitigation protocols, and it does not contradict worst-case or large-system QEM lower bounds because the constants may scale unfavorably with circuit size.

\section{Conclusion}

This work derived a local finite-shot help-harm boundary for fixed Richardson zero-noise extrapolation and classified the three regimes of the MSE crossing in Theorem~\ref{thm:master}. The fixed-Richardson corollary shows that, against an unmitigated estimator with linear leading bias, the exponent is controlled by the effective measurement-variance exponent rather than by Richardson order alone. This yields two concrete observable classes: nonzero-variance Hamiltonian-type measurements follow the $B^{-1/2}$ class, while deterministic stabilizer-type binary measurements with first-order leakage follow the $B^{-1}$ class.

The Qiskit Aer validation supports this distinction through GHZ/QAOA regime separation, synthetic higher-order and critical-threshold checks, independent $q$ estimation, constant-level checks, optimal-allocation extraction, and readout/SPAM regime diagnostics. Small IBM Quantum runs provide traceable hardware consistency checks, including a final ten-graph held-out QAOA replication in which ZNE improved over the unmitigated estimate in nine of ten cases and reduced mean absolute error from $0.06445$ to $0.02810$. The main limitations remain explicit: QAOA depth requires parameter-schedule qualification, readout/SPAM changes the effective measurement protocol rather than preserving the ideal GHZ regime, and hardware checks are qualitative traceability evidence rather than central asymptotic evidence. Future work should test larger circuit families and more realistic calibrated readout protocols under controlled noise-scaling designs.

\section*{Acknowledgements}

The author gratefully acknowledges the personal support and encouragement of his family. The author also acknowledges the open-source software and cloud hardware resources provided by Qiskit, Mitiq, and IBM Quantum.

\section*{Statements and Declarations}

\textbf{Funding} No funding was received for this work.

\textbf{Competing interests} The author declares no financial or non-financial competing interests.

\textbf{Ethics approval} Not applicable. This study does not involve human participants, human data, animals, or clinical material.

\textbf{Data availability} The validation package includes raw counts, processed summaries, bootstrap outputs, crossing tables, variance-exponent fits, constant-level fits, readout-regime diagnostics, IBM job metadata, and figure-generation inputs. The archived validation package is available on Zenodo \cite{ScavinoZenodo2026} at \seqsplit{https://doi.org/10.5281/zenodo.20057161}.

\textbf{Code availability} The code archive includes Qiskit/Mitiq validation scripts, synthetic-test runners, variance-exponent estimation scripts, optimal-allocation extraction scripts, readout-regime runners, IBM hardware-check runners, bootstrap routines, and figure-generation code. The archive includes pinned dependency information and a one-command regeneration path for the reported tables and figures. The versioned archive is available on Zenodo \cite{ScavinoZenodo2026} at \seqsplit{https://doi.org/10.5281/zenodo.20057161}, with the corresponding GitHub repository at \seqsplit{https://github.com/vicenzoscavino1999/zne-finite-shot-boundary-validation}.

\textbf{ORCID} Vicenzo Scavino Alfaro: \seqsplit{https://orcid.org/0009-0000-2472-9785}.

\textbf{Author contributions} Vicenzo Scavino Alfaro conceived the study, developed the theoretical framework, implemented and analyzed the numerical validation, prepared the figures and tables, and wrote and revised the manuscript.

\textbf{AI-assisted tools} AI-assisted tools were used to support language editing, LaTeX formatting, drafting and debugging of validation scripts, revision checks, and exploratory discussion of empirical diagnostics. The author selected the final experimental design, executed the simulations and hardware runs, verified the raw outputs, recomputed the tables and figures, checked the citations, and takes full responsibility for all mathematical claims, code, empirical results, and final manuscript content.

\end{document}